\newtheorem{thm}{Theorem}
\DeclareMathOperator*{\argmin}{arg\,min}
\newcommand{\myparatight}[1]{\smallskip\noindent{\bf {#1}:}~}
\begin{document}

\setlength{\abovedisplayskip}{1.5mm}
\setlength{\belowdisplayskip}{1.5mm}

\title{Data Poisoning Attacks and Defenses to Crowdsourcing Systems
}

\author{Minghong Fang$^{1,2}$, Minghao Sun$^1$, Qi Li$^1$, Neil Zhenqiang Gong$^3$, Jin Tian$^1$, Jia Liu$^2$}
\affiliation{%
	\institution{$^1$Iowa State University \hspace{0.4em}  $^2$The Ohio State University \hspace{0.4em}  $^3$Duke University}
}

\begin{abstract}
A key challenge of big data analytics is how to collect a large volume of (labeled) data. 
Crowdsourcing aims to address this challenge via aggregating and estimating high-quality data (e.g., sentiment label for text) from pervasive clients/users. 
Existing studies on crowdsourcing focus on designing new methods to improve the aggregated data quality from unreliable/noisy clients. 
However, the security aspects of such crowdsourcing systems remain under-explored to date. 
We aim to bridge this gap in this work. 
Specifically, we show that crowdsourcing is vulnerable to \emph{data poisoning attacks}, in which malicious clients provide carefully crafted data to corrupt the aggregated data. 
We formulate our proposed data poisoning attacks as an optimization problem that maximizes the error of the aggregated data.
Our evaluation results on one synthetic and two real-world benchmark datasets demonstrate that the proposed attacks can substantially increase the estimation errors of the aggregated data. 
We also propose two defenses to reduce the impact of malicious clients. 
Our empirical results show that the proposed defenses can substantially reduce the estimation errors of the data poisoning attacks.    	 
\end{abstract}

\begin{CCSXML}
	<ccs2012>
	<concept>
	<concept_id>10002978.10003022.10003026</concept_id>
	<concept_desc>Security and privacy~Systems security</concept_desc>
	<concept_significance>500</concept_significance>
	</concept>
	</ccs2012>
\end{CCSXML}

\ccsdesc[500]{Security and privacy~Systems security}

\keywords{Data poisoning attacks, crowdsourcing, truth discovery
}

\maketitle


\section{Introduction} \label{sec:intro}
{\bf Background and Motivation:} A well-known challenge for big data analytics is that its success highly relies on a large amount of (labeled) data. Crowdsourcing aims to address this challenge by significantly reducing the labeling cost. However, since individuals may make mistakes, a common practice is to hire multiple workers for the same task and then obtain high-quality aggregated data. Specifically, a data requester (called \emph{central server} in this paper) has a set of items/tasks. The server distributes the items to some \emph{workers}; each worker provides a value for each item that is allocated to him/her; and the server estimates an \emph{aggregated value} for each item using the workers' values. 
For instance, the items could be some text documents and the server aims to estimate a numeric value (e.g., a number between -100 and 100) for each document.

Although crowdsourcing holds a great potential to solve the labeling challenges in big data analytics, a critical challenge that affecting its future large-scale adoption stems from the fact that the data provided by workers in many crowdsourcing applications are {\em noisy} and {\em unreliable}.
To extract high-quality information from unreliable and noisy values provided by workers in crowdsourcing, a widely adopted approach is the so-called truth discovery methods~\cite{dawid1979maximum,dong2009integrating,galland2010corroborating,garcia2017truth,li2014confidence, li2014resolving,li2015truth,li2016survey,li2016conflicts,pasternack2011making,raykar2010learning,wan2016truth,wang2017discovering,yin2008truth,zhao2012probabilistic,zhao2012bayesian}.
Generally speaking, the truth discovery methods cover a family of algorithms that perform {\em weighted} aggregation of worker information based on the quality of each worker.
Specifically, in many real-world crowdsourcing applications, some workers may provide biased or incorrect values for items due to various reasons, e.g., lack of effort, lack of expertise, etc. 
To handle unreliable and noisy worker information, state-of-the-art truth discovery methods (e.g., the conflict resolution on heterogeneous data (CRH)~\cite{li2014resolving,li2016conflicts}, Gaussian truth model (GTM)~\cite{zhao2012probabilistic}, etc.) jointly estimate workers' reliability measured by certain uncertainty metrics (e.g., variance, etc.) and use these reliability metrics as weights in aggregating the values provided by the workers for each item.  
The rationale behind the truth discovery methods is simple: if a worker does not have a large deviation from the majority of the workers very often, then this worker is more likely to be reliable.
Further, if a piece of information is provided by reliable workers, then this information is more likely to be correct and should be assigned a larger weight in the aggregation. 

It is worth noting that, to date, most algorithms in the truth discovery family are based on the assumption that all workers are benign and the unreliable values from the workers are caused by unavoidable randomness in the nature. 
Unfortunately, in the presence of {\em malicious workers} who could provide {\em carefully crafted values} to the server (aka {\em data poisoning attacks}), recent studies have found that existing truth discovery methods could perform rather poorly  (see, e.g., \cite{miao2018attack, miao2018towards}).
We note that, although these results exposed the vulnerability of truth discovery methods,
they remain mostly limited to crowdsourcing applications with categorical labels (i.e., discrete labels in multiple-choice surveys, etc.).
So far, research results on data poisoning attacks and defense for crowdsourcing with {\em continuous} labels are still quite limited.
However, crowdsourcing applications with continuous labeling are prevalent in practice (for example, the temperature values in the crowdsourcing-based weather reporting are continuous). 
In light of the growing importance of crowdsourcing applications, there is a compelling need to investigate and understand the data poisoning attacks and defense for continuous-labeled crowdsourcing systems.

\noindent
{\bf Our work:} 
In this paper, we aim to bridge this gap and show that truth discovery methods are vulnerable to data poisoning attacks, in which an attacker injects malicious workers to a crowdsourcing system and the malicious workers provide carefully crafted values to corrupt the truth discovery system. In particular, our data poisoning attacks can increase the estimation errors of the aggregated values substantially. 
 
Toward this end, we formulate our data poisoning attacks for truth discovery methods as an optimization problem, whose objective function is to maximize the estimation errors of the aggregated values for the attacker-chosen targeted items and whose variables are the values provided by the malicious workers. 
In particular, our optimization problem is bi-level, which is NP-hard and challenging to solve exactly. We address the challenge via iteratively solving the upper-level and lower-level subproblems in our bi-level optimization problem via a projected gradient ascent method. 

We evaluate our data poisoning attacks using one synthetic dataset and two well-known benchmark datasets in the crowdsourcing community. For instance, in one benchmark dataset called Emotion, the central server aims to estimate the sentiment values (ranging from -100 to 100) for 700 documents from 38 workers, where each document is allocated to 10 workers. 
To show the effectiveness of our data poisoning attacks on the truth discovery algorithms, we use two state-of-the-art methods from this family called \emph{Conflict Resolution on Heterogeneous Data (CRH)} and \emph{Gaussian Truth Model (GTM)} as examples. 
We show that our attacks can substantially increase the average estimation error.
%
For instance, on the sentiment estimation dataset, our attack can increase the average estimation error of the sentiment values to 93.69 when 10\% of workers are malicious under the CRH model.   

We also propose two defense mechanisms to mitigate our data poisoning attacks, namely \emph{Median-of-Weighted-Average (MWA)} and \emph{Maximize Influence of Estimation (MIE)}.
In the MWA defense, 
the server first partitions the workers who provide values for a given item into  groups, computes the weighted average in each group, and then estimates the median of the weighted average among the groups as the final aggregated value for the item. 
Note that in MWA, we considers all workers to estimate the aggregated values, though the impact of the malicious workers is mitigated by robust aggregation. 
By contrast, in MIE, we use an influence function to identify the potential malicious workers and remove them before estimating the aggregated values. 
Our empirical results show that our defenses can substantially reduce the estimation errors of our data poisoning attacks. 

Our contributions in this paper are summarized as follows:

\begin{list}{\labelitemi}{\leftmargin=2em \itemindent=-0.3em \itemsep=.2em}
	\item We propose data poisoning attacks to crowdsourcing, which can be formulated as a bi-level optimization problem.
	Due to the NP-hardness of the problem, we propose an efficient algorithm that achieves competitive results. 

	\item We evaluate our attacks on three datasets and show that our attacks can increase the estimation errors substantially. 

	\item  We propose two defenses to mitigate our attacks. Our experimental results demonstrate that the proposed defenses can effectively reduce the estimation errors.
\end{list}


\section{Preliminaries and Related Work} \label{sec:related}
In this section, we first provide an overview and some necessary preliminaries of crowdsourcing and truth discovery methods, using two state-of-the-art truth discovery methods called \emph{Conflict Resolution on Heterogeneous Data (CRH)}~\cite{li2014resolving,li2016conflicts} and \emph{Gaussian Truth Model (GTM)}~\cite{zhao2012probabilistic} as examples. 
Then, we provide an overview on data poisoning attacks, which put our work in comparative perspectives. 
Table~\ref{Main_notations} lists the key notation used in this paper.

\begin{table}[htbp]
	\centering
	\addtolength{\tabcolsep}{-4.0pt}
	\caption{Summary of key notation.}
	\label{Main_notations}%
	\small
	\begin{tabular}{cccc}
		\hline
		Notation &  Definition  \\
		\hline
		$\mathcal{I}$/$\vert \mathcal{I} \vert$   & Set/number of items \\
		$\mathcal{U}$/$\mathcal{\widetilde{U}}$   & Set of normal/malicious workers \\
		$\mathcal{M}$   & Set of all workers, $\mathcal{M} = \mathcal{U} \cup \mathcal{\widetilde{U}}$    \\
		${x_i^u}$/${x}_i^{\tilde{u}}$ & Value of normal/malicious worker $u$/$\tilde{u}$ on item $i$   \\
		$x^{\star}_i$/$\widehat{x}^{*}_i$ & Aggregated value for item $i$ before/after attack \\
		$w_u$/${w}_{\tilde{u}}$ & Weight of normal/malicious worker $u$/$\tilde{u}$ \\
		${\sigma}_u^2$/${\sigma}_{\tilde{u}}^2$  & Variance of normal/malicious worker $u$/$\tilde{u}$ \\
		$\mathcal{U}_i$/$\mathcal{\widetilde{U}}_i$   & Set of normal/malicious workers who provide values for item $i$    \\
		$\mathcal{I}_u$/$\mathcal{{I}}_{\tilde{u}}$   & Set of items rated by normal/malicious worker $u$/$\tilde{u}$    \\
		\hline
	\end{tabular}%
\end{table}%

\subsection{The Truth Discovery Methods: A Primer}

In this subsection, we provide an overview on the family of truth discovery methods for crowdsourcing.
In most crowdsourcing systems, there is a {\em central server} performing data aggregation and there are some clients called \emph{workers}. 
We denote the set of workers as $\mathcal{U}$. The server has a set of items $\mathcal{I}$ and aims to estimate a certain \emph{value} for each item based on the input from the workers. 
In this work, we focus on the cases where the value to be estimated is {\em continuous}. 
For instance, the items could be a set of text documents and the server aims to estimate a numeric value for each document from the workers.  
The server assigns each item to a subset of workers. We denote by $\mathcal{I}_u$ the set of items that are allocated to worker $u$. Moreover, we denote by ${x_i^u}$ the value that the worker $u$ provides for item $i$, where  $u \in \mathcal{U}$ and $i \in \mathcal{I}_u$.

To find reliable information among unreliable data, a naive approach is majority voting or taking the average of the values provided by workers. 
A major limitation of these methods is that they do not take the quality/reliability of workers into consideration. In practice, the quality of different workers varies. 
To address this challenge, the truth discovery approaches are proposed to automatically jointly estimate the quality of workers while performing information aggregation.
The rationale behind the truth discover methods is to characterize the reliability of a worker as a weight. 
If a worker has a smaller weight, then all of its provided values are less reliable. 
To illustrate the basic idea of the truth discovery methods, in what follows, we use two state-of-the-art algorithms in this family that are widely adopted in crowdsourcing systems as concrete examples: i) conflict resolution on heterogeneous data (CRH)~\cite{li2014resolving,li2016conflicts} and ii) Gaussian truth model (GTM)~\cite{zhao2012probabilistic}.

\subsubsection{The Conflict Resolution on Heterogeneous Data Model (CRH)} 
CRH, a state-of-the-art truth discovery method, jointly estimates the aggregated values of items and the weights of workers. In particular, CRH  formulates the estimations of the aggregated values and worker weights as the following optimization problem:
\begin{equation}
\begin{split}
\min_{X^{\star}, W} & f(X^{\star},W)=\sum\nolimits_{u \in \mathcal{U}} w_u \sum\nolimits_{i\in {\mathcal{I}_u}}d(x_i^u, x_i^\star) \\
&  \text{s.t.} \quad \sum\nolimits_{u \in \mathcal{U}} \exp(-w_u)=1,
\label{CRH_orgin}
\end{split}
\end{equation}
where $X^{\star} = \left\{ {x_i^{\star}} \right\}_{i \in \mathcal{I}}$ is the set of {aggregated values} for all the items, $W = \left\{ {w_u} \right\}_{u \in \mathcal{U}}$ is the set of weights for all the workers, $w_u$ is weight for worker $u$, $d(\cdot)$ is a function to measure the distance between a worker's value of an item and the item's aggregated value, which reflects the reliability of this particular worker. 
In our experiments, we use the square distance function. CRH solves the optimization problem by iteratively alternating between the following two steps:

\myparatight{Step 1 (Estimate the aggregated values)} In this step, the workers' weights $W$ are fixed, and the aggregated value for item $i$ is updated as follows:
\begin{align}
x_i^{\star}=\frac{{\sum\nolimits_{u \in {\mathcal{U}_i}} {{w_u}x_i^u} }}{{\sum\nolimits_{u \in {\mathcal{U}_i}} {w_u} }},
\label{CRH_orgin_update_truth}
\end{align}
where $\mathcal{U}_i$ is the set of workers who provide values for item $i$.

\myparatight{ Step 2 (Update worker weights)} Next, the aggregated values $X^{\star}$ are fixed, and the weight of worker $u$ is updated as follows:
\begin{align} 
{w_u} = \log \left( {\frac{{\sum\nolimits_{k \in \mathcal{U}} {\sum\nolimits_{i\in {\mathcal{I}_k}} {d(x_i^k,x_i^{\star})} } }}{{\sum\nolimits_{i\in {\mathcal{I}_u}} {d(x_i^u,x_i^{\star})} }}} \right).
\label{CRH_orgin_update_weight}
\end{align}


It can be seen from \eqref{CRH_orgin_update_weight} that, the smaller the distance $d(x_i^u,x_i^{\star})$, the larger the weight of worker $u$ (i.e., more reliable).
We can use the block coordinate descent method~\cite{bertsekas1997nonlinear} to iteratively update the above two-step procedure until some convergence criterion is met. Algorithm~\ref{CRH_framework} shows the CRH framework. 
In this paper, we assume that all workers are given equal initial weights in the CRH method.

\begin{algorithm}[t!]
	\caption{The CRH framework \protect\cite{li2014resolving,li2016conflicts}.}
	\label{CRH_framework}
	\begin{algorithmic}[1]
		\renewcommand{\algorithmicrequire}{\textbf{Input:}}
		\renewcommand{\algorithmicensure}{\textbf{Output:}}
		\REQUIRE Values from workers ${x_i^u}$ for $u \in \mathcal{U}, i \in \mathcal{I}_u$.
		\ENSURE  Aggregated values $X^{\star}$ and worker weights $W$.
		\STATE  Server initializes the worker weights.
		\WHILE {the convergence condition is not satisfied}
		\STATE  Server updates the aggregated value of each item according to Eq.~(\ref{CRH_orgin_update_truth}).
		\STATE Server updates the weight of each worker according to Eq.~(\ref{CRH_orgin_update_weight}).
		\ENDWHILE
		\RETURN $X^{\star}$ and $W$. 
	\end{algorithmic} 
\end{algorithm}

\subsubsection{The Gaussian Truth Model (GTM)} 

GTM model is a Bayesian probabilistic model designed for numeric data in truth discovery. 
In GTM, the reliability of a worker are captured by a variance parameter. Intuitively, a worker with larger variance is more likely to provide inaccurate values that deviate more from the truth. 
The GTM model first normalizes all input values to its z-scores, then tries to solve the following optimization problem:
\begin{align}
&\min_{X^{\star}, \Omega}  f(X^{\star},\Omega) \propto - \sum\nolimits_{u \in {\mathcal{U}}} \left( 2(\alpha + 1)\log{{\sigma}_u} + \frac{\beta}{{\sigma}_u^2} \right) - \nonumber \\ 
& \sum\nolimits_{i \in \mathcal{I}} \! \frac{( x_i^{\star} - \mu_0 )^2}{2{\sigma}_0^2}
-
\sum\nolimits_{i \in {\mathcal{I}}} \! \sum\nolimits_{u \in {\mathcal{U}_i}}\left( \log{{\sigma}_u} + \frac{( x_i^u - x_i^{\star} )^2}{2{\sigma}_u^2} \right),
\label{GTM_orgin}
\end{align}
where $\Omega = \left\{ {{\sigma}_u^2} \right\}_{u \in \mathcal{U}}$ is the set of variances for all the workers, ${\sigma}_u^2$ is the variance of worker $u$,
$\mu_0$ and ${\sigma}_0^2$ are prior parameters and $\alpha$ and $\beta$ are hyper-parameters.
The GTM model leverages the EM algorithm~\cite{dempster1977maximum} which contains the following expectation step (E-step) and maximization step (M-step) to  iteratively update aggregated values and variance parameters of workers':
 
\myparatight{E-step (Estimate the aggregated values)}  In this step, the workers' variances are fixed, and the aggregated value for item $i$ is computed by solving $\frac{\partial f}{x_i^{\star}}=0$, which yields:
\begin{align}
x_i^{\star}=\frac{\frac{\mu_0}{{\sigma}_0^2} +  \sum\nolimits_{u \in {\mathcal{U}_i}}  \frac{x_i^u}{{\sigma}_u^2}  }  
{\frac{1}{{\sigma}_0^2} +  \sum\nolimits_{u \in {\mathcal{U}_i}}  \frac{1}{{\sigma}_u^2}}.
\label{GTM_orgin_update_truth}
\end{align}

\myparatight{M-step (Update worker variances)}  In this step, the aggregated values $X^{\star}$ are fixed, and the variance of worker $u$ can be calculated by solving $\frac{\partial f}{{\sigma}_u^2}=0$, which yields:
\begin{align}
{\sigma}_u^2 = \frac{2\beta + \sum\nolimits_{i \in \mathcal{I}_u}(x_i^u - x_i^{\star})^2}{2(\alpha+1) + \vert \mathcal{I}_u \vert},
\label{GTM_orgin_update_variance}
\end{align}
where $\vert \mathcal{I}_u \vert$ is the number of values provided by worker $u$.
The EM algorithm alternates between the above two steps iteratively
until some convergence criterion is satisfied. 
The GTM algorithmic framework is similar to Algorithm~\ref{CRH_framework}, and we omit it here for briefly.

\subsection{Data Poisoning Attacks: An Overview} \label{sec:data_poisoning}

Generally speaking, data poisoning attacks refer to manipulating data to corrupt certain computational results based on those data. 
For instance, in machine learning, a classifier is learnt using a training dataset; and a data poisoning attack can carefully forge the training dataset to corrupt the learnt classifier~\cite{alfeld2016data,biggio2012poisoning,chen2017targeted,jagielski2018manipulating,shafahi2018poison,xiao2015feature}. 
In data poisoning attacks to recommender systems~\cite{christakopoulou2019adversarial,li2016data, yang2017fake,fang2018poisoning,fang2020influence}, an attacker can inject fake users with carefully crafted rating values to a recommender system such that the recommender system makes recommendations as the attacker desires, e.g., recommending an attacker-chosen item to many normal users. In federated learning, an attacker can inject malicious workers with misleading training samples to corrupt the learnt global model~\cite{baruch2019little,bhagoji2018analyzing,fang2019local,xie2019fall,cao2020fltrust}. 
Different computations (e.g., machine learning, recommender system, and federated learning) often require different data poisoning attacks to optimize the attack effectiveness. 
The most relevant to ours are~\cite{miao2018attack, miao2018towards,tahmasebian2020crowdsourcing}, where the authors proposed efficient attack algorithms that reduce the effectiveness of crowdsourcing systems with strategic malicious workers. However, the proposed attack models focus on categorical data and do not consider the potential defense deployed by the server.


\section{Problem Formulation} \label{sec:problem}

In this section, we first introduce our threat model, including the attacker's goal, capability, and knowledge.
Then, we formulate the attacker's goal mathematically.

\subsection{Threat Model}

\begin{figure}[!t]
	\centering
	\includegraphics[scale = 0.28]{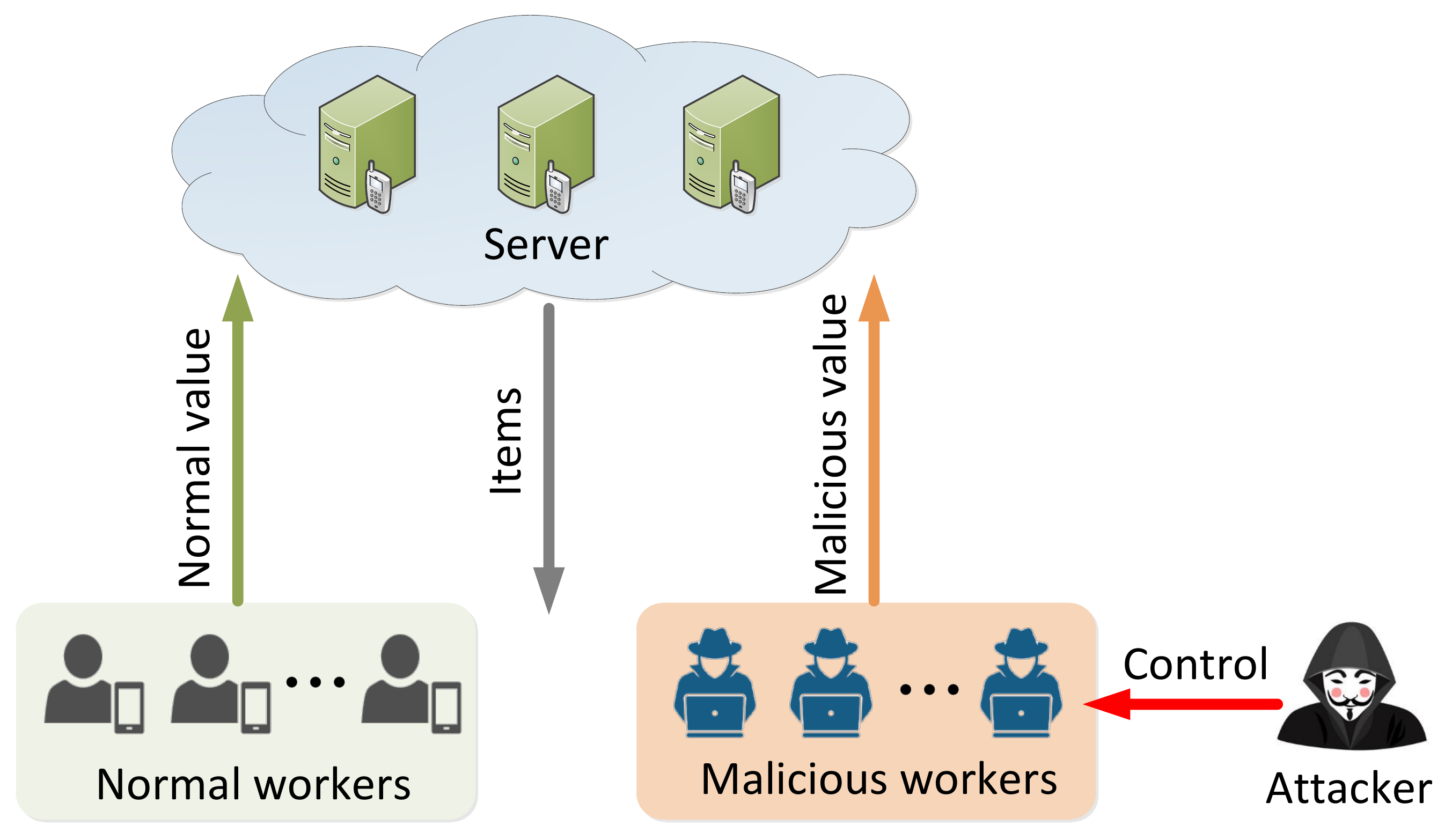}
	\caption{Crowdsourcing systems under attack.}\label{Crowd_sensing_under_attack}
\end{figure}

\myparatight{Attacker's goal} Given a targeted crowdsourcing system, the goal of the attacker is to maximize the estimation errors of the aggregated values for some attacker-chosen targeted items. 
This attack is well motivated in real-world crowdsourcing systems. For instance, an attacker may be interested in manipulating the real-time navigation crowdsourcing system such that the system  provides misleading or even life-threatening directions to users. An attacker could also attack a competitor's system to gain competitive advantages. 

\myparatight{Attacker's capability and knowledge}
In our threat model, we assume the attacker is able to inject some malicious workers into the crowdsourcing system and launch attacks by carefully crafting their values, as shown in Figure~\ref{Crowd_sensing_under_attack}. This threat model is practical because a crowdsourcing system is essentially a distributed system and an attacker can inject malicious workers into it, 
which is more realistic than modifying existing data of normal workers.

The attacker can have different degrees of knowledge of the targeted crowdsourcing system. In particular, we consider two cases, \emph{full knowledge} and \emph{partial knowledge}. 
In the full knowledge scenario, the attacker has full knowledge of the aggregation method and all values provided by normal workers.
We note that, although appearing to be a strong assumption, the full-knowledge setting is \emph{not} uncommon in practice since all data and the sources of the data could be public in crowdsourcing.
For instance, in the weather forecast integration task, the workers in different monitoring stations collect the local weather data and upload their data to the crowdsourcing platform (e.g., website). These data are available to all workers and each worker can see weather information at other locations and know where the data comes from.
%
%

In the partial knowledge scenario, the attacker knows the aggregation method but only knows the values of a subset of normal workers. For instance, the attacker may compromise some normal workers via bribing them, compromising their computer systems, and/or stealing their credentials.

\subsection{Formulating Data Poisoning Attacks}
We  formulate our data poisoning attacks as an optimization problem, which maximizes the estimation errors of the targeted items' aggregated values. 
Suppose that $\mathcal{T}$ is the set of attacker-chosen targeted items and $\alpha$ fraction of workers are malicious. 
Specifically, we let $\mathcal{U}$ and $\mathcal{\widetilde{U}}$ denote the sets of normal and malicious workers, respectively. 
Then, we have  $\vert \mathcal{\widetilde{U}} \vert = \left  \lfloor \frac{{\alpha \vert \mathcal{U} \vert }}{{1 - \alpha }} \right \rfloor $. 
We let ${x}_t^{\tilde{u}} $ denote the value that a malicious worker $\tilde{u} \in \mathcal{\widetilde{U}}$ provides on item $t \in \mathcal{T}$.
Then, the attacker's goal is to find an optimal value for each malicious worker to rate each targeted item, such that after injecting those malicious workers into the crowdsourcing system, the distance of the aggregated values before and after our attack is maximized.
We let $x^{\star}_t$ and $\widehat{x}^{*}_t$ denote the aggregated values for item $t$ before and after the poisoning attack, respectively.
Then, we can formulate our attack as follows:
\begin{align}
\label{attack_goal}
&  \underset{\left\{ {x}_t^{\tilde{u}} \right\}_{t \in \mathcal{T},\tilde{u} \in \mathcal{\widetilde{U}}} } {\text{Maximize}}  \frac{1}{\left| \mathcal{T} \right|} \sum\limits_{t \in {\mathcal{T}}} d(\widehat{x}^{*}_t, x^{\star}_t )     \\
&\text{s.t. }    \vert \mathcal{\widetilde{U}} \vert = \left \lfloor \frac{{\alpha \vert \mathcal{U} \vert }}{{1 - \alpha }} \right \rfloor ,
\end{align}
where $d(\cdot)$ is a distance function that measures the estimation error of an item introduced by our attack. In our experiments, we use the square distance function. The objective function $\frac{1}{\left| \mathcal{T} \right|} \sum\nolimits_{t \in {\mathcal{T}}} d(\widehat{x}^{*}_t, x^{\star}_t )$ measures the average estimation error for the targeted items introduced by the attack. Note that our problem formulation in Eq.~(\ref{attack_goal}) can be applied to any truth discovery method. Solving the optimization problem for a given truth discovery method (e.g., CRH, GTM, etc.) leads to a specific data poisoning attack to this particular method.


\section{Our Attacks} \label{attackModel}

In this section, we introduce our data poisoning attacks in the full-knowledge and partial-knowledge scenarios, respectively. Note that our attacks are essentially to solve the optimization problem in Eq.~(\ref{attack_goal}) under these two settings.

\subsection{Full-Knowledge Attack}

In the full-knowledge scenario, the attacker knows the values provided by the normal workers. 
Note that the attacker's goal in Eq.~(\ref{attack_goal}) is to maximize the deviation of the truth discovery's output before and after the attack, while the server's goal of Eq.~(\ref{CRH_orgin}) or Eq.~(\ref{GTM_orgin}) is to estimate the aggregated value for each item.
Moreover, the malicious workers' values should be within the normal range to avoid outlier detection. 
Considering these two goals together, we can instantiate the optimization problem in Eq.~(\ref{attack_goal}) for the CRH and GTM models as the following bi-level optimization problem for the attacker:
\begin{align} 
\label{bi_level_opt}
& \underset{\left\{ {x}_t^{\tilde{u}} \right\}_{t \in \mathcal{T},\tilde{u} \in \mathcal{\widetilde{U}}} } {\text{Maximize}}  \frac{1}{\left| \mathcal{T} \right|} \sum\limits_{t \in {\mathcal{T}}} d(\widehat{x}^{*}_t, x^{\star}_t )     \\
&  \text{s.t.} \! \!\quad {x}_t^{\tilde{u}} \in \left[ x_t^{\min}, x_t^{\max} \right], \forall \tilde{u} \in \mathcal{\widetilde{U}}, \forall t \in \mathcal{T}, \nonumber \\
& \quad\quad \{ \widehat{X}^{*}, \Lambda^* \} = 
\argmin\limits_{\widehat{X}^{\star}, \Lambda}  f(\widehat{X}^{\star}, \Lambda), \nonumber
\end{align}
where $x_t^{\min}$ and $x_t^{\max}$ are the minimum and maximum values of item $t$ provided by normal workers, respectively; $\Lambda^*$ are the after-attack weight or variance parameters for all workers. 
We remind here that Problem~(\ref{bi_level_opt}) is a general attack framework, and can be applied to any truth discovery method.
To be specific, for the CRH model, we just substitute $f(\widehat{X}^{\star}, W)$ in Eq.~(\ref{CRH_orgin}) to $f(\widehat{X}^{\star}, \Lambda)$ in Eq.~(\ref{bi_level_opt}) and let $W=\Lambda$; for the GTM model, we substitute $f(\widehat{X}^{\star}, \Omega)$ in Eq.~(\ref{GTM_orgin}) to $f(\widehat{X}^{\star}, \Lambda)$ in Eq.~(\ref{bi_level_opt}) and let $\Omega=\Lambda$.
From Eq.~(\ref{bi_level_opt}), we can see that the upper-level problem is to determine the optimal fake values for malicious workers, and the lower-level problem is to estimate the aggregated value for each item.
The lower-level problem can be considered a constraint of the upper-level problem. 
Bi-level optimization is NP-hard in general~\cite{hansen1992new}. 
In our bi-level formulation, although the upper-level problem is relatively simple, the lower-level problem is highly non-linear and non-convex.
In this paper, we propose the following two-step iterative method to solve the above bi-level optimization problem:

\myparatight{Step 1 (Update the aggregated values and worker weights/var-iances)} In this step, the attacker fixes the malicious workers' values, then solves the lower-level optimization problem to obtain aggregated values and worker weights/variances $\{ \widehat{X}^{*}, \Lambda^*\}$. As we have discussed above, this step can be done for the CRH model by solving Eqs.~(\ref{CRH_orgin_update_truth})-(\ref{CRH_orgin_update_weight}) or for the GTM model by solving Eqs.~(\ref{GTM_orgin_update_truth})-(\ref{GTM_orgin_update_variance}).

\myparatight{Step 2 (Update malicious workers' values)} In this step, we let
$\widetilde{X} =  \left\{ {x_t^{\tilde{u}}} \right\}_{t \in \mathcal{T}, \tilde{u} \in \mathcal{\widetilde{U}}}$ denote the values of all malicious workers, and define the following objective function:
\begin{align}
\mathcal{L}(\widetilde{X}) = \sum\nolimits_{t \in {\mathcal{T}}} d(\widehat{x}^{*}_t, x^{\star}_t ).
\label{loss_function_L}
\end{align}

When taking the malicious workers into consideration, we can compute the term $\widehat{x}^{*}_t$ in Eq.~(\ref{loss_function_L}) as followings for the CRH model:
\begin{align}
\widehat{x}_i^{*}=
\frac{{\sum\nolimits_{u \in {\mathcal{U}_i}} {{w_u}x_i^u}  } + {\sum\nolimits_{\tilde{u} \in \mathcal{\widetilde{U}}_i} {{{w}_{\tilde{u}}}{x}_i^{\tilde{u}}} }}
{{\sum\nolimits_{u \in {\mathcal{U}_i}} {w_u} } + {\sum\nolimits_{\tilde{u} \in \mathcal{\widetilde{U}}_i} {{w}_{\tilde{u}}} }}.
\label{CRH_att_update_truth}
\end{align}

For the GTM model, $\widehat{x}^{*}_t$ can be computed as:
\begin{align}
\widehat{x}_i^{*}=\frac{\frac{\mu_0}{{\sigma}_0^2} +  \sum\nolimits_{u \in {\mathcal{U}_i}}  \frac{x_i^u}{{\sigma}_u^2} + \sum\nolimits_{\tilde{u} \in {\mathcal{\widetilde{U}}_i}}  \frac{x_i^{\tilde{u}}}{{\sigma}_{\tilde{u}}^2} }  
{\frac{1}{{\sigma}_0^2} +  \sum\nolimits_{u \in {\mathcal{U}_i}}  \frac{1}{{\sigma}_u^2} + \sum\nolimits_{\tilde{u} \in {\mathcal{\widetilde{U}}_i}}  \frac{1}{{\sigma}_{\tilde{u}}^2} }.
\label{GTM_att_update_truth}
\end{align}

For malicious worker $v \in \widetilde{U}$, the gradient of the objective function $\mathcal{L}(\widetilde{X})$ with respect to ${x}_t^v$ can be computed as follows:
\begin{align}
\!\! {\nabla _{{x}_t^v}}\mathcal{L} (\widetilde{X})
\!=\! \sum\limits_{t^{\prime} \in {\mathcal{T}}} 
\frac{{\partial d(\widehat{x}^{*}_{t^{\prime}}, x^{\star}_{t^{\prime}} )}} {{\partial {{x}_t^v}}}
\!=\! \sum\limits_{t^{\prime} \in {\mathcal{T}}} 
\frac{{\partial d(\widehat{x}^{*}_{t^{\prime}}, x^{\star}_{t^{\prime}} )}}{{\partial {\widehat{x}^{*}_{t^{\prime}}}}} 
\cdot 
\frac{{\partial {\widehat{x}^{*}_{t^{\prime}}}}} {{\partial {{x}_t^v}}}.
\label{step2_upper_level}
\end{align}

Here, if we adopt the square distance function $d(\widehat{x}^{*}_{t^{\prime}}, x^{\star}_{t^{\prime}} )={(\widehat{x}^{*}_{t^{\prime}} - x^{\star}_{t^{\prime}})^2}$, then we have:
\begin{align}
\frac{{\partial d(\widehat{x}^{*}_{t^{\prime}}, x^{\star}_{t^{\prime}} )}}{{\partial {\widehat{x}^{*}_{t^{\prime}}}}}  = 2(\widehat{x}^{*}_{t^{\prime}} - x^{\star}_{t^{\prime}} ).
\end{align}

For the CRH model, from Eq.~(\ref{CRH_att_update_truth}), the gradient $\frac{{\partial {\widehat{x}^{*}_{t^{\prime}}}}} {{\partial {{x}_t^v}}} $ can be calculated as:
\begin{equation}
\frac{{\partial {\widehat{x}^{*}_{t^{\prime}}}}} {{\partial {{x}_t^v}}} =
\left\{
\begin{matrix}
\frac{ {w}_v }
{{\sum\nolimits_{u \in {\mathcal{U}_t}} {w_u} } + {\sum\nolimits_{\tilde{u} \in \mathcal{\widetilde{U}}_t} {{w}_{\tilde{u}}} }}, &  t^{\prime}=t,\\
0, &  \text{otherwise}.
\end{matrix}
\right.
\end{equation}

For the GTM model, according to Eq.~(\ref{GTM_att_update_truth}), the gradient $\frac{{\partial {\widehat{x}^{*}_{t^{\prime}}}}} {{\partial {{x}_t^v}}} $ can be calculated as:

\begin{equation}
\frac{{\partial {\widehat{x}^{*}_{t^{\prime}}}}} {{\partial {{x}_t^v}}} =
\left\{
\begin{matrix}
\frac{ 1}  
{ {\sigma}_v^2 \left( \frac{1}{{\sigma}_0^2} +  \sum\limits_{u \in {\mathcal{U}_t}}  \frac{1}{{\sigma}_u^2} + \sum\limits_{\tilde{u} \in {\mathcal{\widetilde{U}}_t}}  \frac{1}{{\sigma}_{\tilde{u}}^2} \right)}, &  t^{\prime}=t,\\
0, &  \text{otherwise}.
\end{matrix}
\right.
\end{equation}

After obtaining ${\nabla _{{x}_t^v}}\mathcal{L}(\widetilde{X})$, we then can use the projected gradient ascent method to update the value of malicious worker $v$ as follows:
\begin{align}
{x}_t^v [r+1] = \text{Proj}_{\left[ x_t^{\min}, x_t^{\max} \right]} \left(  {x}_t^v [r] + {\eta _r} \cdot {\nabla _{{x}_t^v}}\mathcal{L} (\widetilde{X}) \right) ,
\label{gradient_ascent}
\end{align}
where $r$ is the $r$-th iteration, $\text{Proj}_{\left[ x_t^{\min}, x_t^{\max} \right]}(\cdot) $ is the projection operator onto the range $\left[ x_t^{\min}, x_t^{\max} \right]$, and $\eta _r$ is the step size in the $r$-th iteration. 
In this paper, we assume that the attacker initializes the workers with equal weights for the CRH model and equal variances for the GTM model.
Algorithm~\ref{full_know_poisoning_attack_alg} summaries our full-knowledge attack algorithm for the CRH model.
The full-knowledge attack algorithm for the GTM model is similar to Algorithm~\ref{full_know_poisoning_attack_alg}, and we omit it here to avoid repetitiveness.

\begin{algorithm}[t!]
	\caption{{Full-knowledge data poisoning attack.}}
	\label{full_know_poisoning_attack_alg}
	\begin{algorithmic}[1]
		\renewcommand{\algorithmicrequire}{\textbf{Input:}}
		\renewcommand{\algorithmicensure}{\textbf{Output:}}
		\REQUIRE Values from all normal workers ${x_i^u}$ for $u \in \mathcal{U}, i \in \mathcal{I}_u$.
		\ENSURE  Malicious workers' values.
		\STATE The attacker initializes the workers' weights.
		\STATE The attacker estimates the aggregated values before attack by iteratively solving Eqs.~(\ref{CRH_orgin_update_truth})-(\ref{CRH_orgin_update_weight}). 
		\WHILE {the convergence condition is not satisfied}
		\STATE The attacker computes the aggregated values and workers' weights $\{ \widehat{X}^{*}, W^*\}$ by iteratively solving Eqs.~(\ref{CRH_orgin_update_truth})-(\ref{CRH_orgin_update_weight}).
		\STATE The attacker estimates the gradient ${\nabla _{{x}_t^v}}\mathcal{L}(\widetilde{X})$ according to Eq.~(\ref{step2_upper_level}).
		\STATE The attacker updates malicious workers' values according to Eq.~(\ref{gradient_ascent}).
		\ENDWHILE
		\RETURN ${x}_t^v$ for $v \in \widetilde{U}, t \in \mathcal{T}$. 
	\end{algorithmic} 
\end{algorithm}

\subsection{Partial-Knowledge Attack}
In the previous section, we showed that the attacker can launch efficient data poisoning attacks to CRH and GTM truth discovery methods when the attacker has full knowledge of the targeted system.
However, this could be a restrictive assumption in practice.
In this section, we consider a weaker assumption that the attacker only observes part of the values of normal workers on the targeted items. 
Further, we note that in the partial-knowledge attack, the attacker only needs access to the values of normal workers for the targeted items, i.e., the attacker does not need to know the values of normal workers for non-targeted items.

\subsubsection{Aggregated Values Estimation with Bootstrapping}

In the partial-knowledge attack, for a given targeted item, it is hard for the attacker to estimate the aggregated value accurately since he only has access to part of the values provided by normal workers.
To address this challenge, we leverage the Bootstrapping \cite{cheng2014parallel,efron1992bootstrap,efron1994introduction,hogg2005introduction} technique to obtain more accurate before-attack estimated aggregated values for targeted items.
Bootstrapping is a classic re-sampling method to estimate a sample distribution. 
The basic idea of Bootstrapping is to independently sample with replacement from an observed dataset with the same sample size, and perform estimation among these resampled data.

We let $\mathcal{S}_t \in \mathcal{U}_t$ denote the set of normal workers whose
observations on the targeted item $t \in \mathcal{T}$ can be accessed by the attacker.
In our attack model, once the normal workers' weights are held fixed, the attacker uses the Bootstrapping method to obtain $B$ estimated aggregated value for item $t$. 
To be specific, the{\tiny } $b$-th estimation $\widehat{x}^{b}_t$, $1 \le b \le B$, can be calculated by the following two steps:

\myparatight{Step 1 (Workers Bootstrapping)} The attacker randomly samples a set of normal workers $\mathcal{S}_t^b$ from $\mathcal{S}_t$ with replacement, such that $\vert \mathcal{S}_t^b \vert = \vert \mathcal{S}_t \vert$.

\myparatight{Step 2 (Value Estimation)} The attacker computes the aggregated value for item $t$ in the $b$-th estimation $\widehat{x}^{b}_t$ according to Eq.~(\ref{CRH_orgin_update_truth}) for the CRH model or Eq.~(\ref{GTM_orgin_update_truth}) for the GTM model based on the sampled values provided by workers in set $\mathcal{S}_t^b$.

After the attacker repeats the above two-step procedure $B$ times, the attacker obtains $B$ estimated aggregated values for item $t$. 
Then, the final before-attack estimated aggregated value $\widehat{x}_t^{boot}$ can be computed as:
\begin{align}
\widehat{x}_t^{boot} = \frac{1}{B}\sum\nolimits_{b=1}^B \widehat{x}^{b}_t. 
\label{x_boot}
\end{align}

After the attacker uses the Bootstrapping technique to estimate the before-attack value of each item, the attacker then uses the projected gradient ascent method to update malicious workers' value.  
Algorithm \ref{partial_know_poisoning_attack_alg} summaries our partial-knowledge attack algorithm for the CRH model, and the partial-knowledge attack algorithm for the GTM model follows a similar procedure.
Note that we do not leverage the Bootstrapping method to estimate the after-attack value, since the majority of value of normal workers may be drawn from a certain distribution, while value of malicious workers do not necessarily fit the distribution.

\begin{algorithm}[t!]
	\caption{{Partial-knowledge data poisoning attack.}}
	\label{partial_know_poisoning_attack_alg}
	\begin{algorithmic}[1]
		\renewcommand{\algorithmicrequire}{\textbf{Input:}}
		\renewcommand{\algorithmicensure}{\textbf{Output:}}
		\REQUIRE Part of values on the targeted items provided by normal workers in set $\mathcal{S}_t$, $ t \in \mathcal{T}$, $B$.
		\ENSURE  Malicious workers' values.
		\STATE The attacker initializes the workers' weights.
		
		 //Estimate the aggregated values before attack.
		\WHILE {the convergence condition is not satisfied}
		\FOR {each $t \in \mathcal{T}$} 
		\FOR {$b=1,\cdots,B$} 
		\STATE The attacker first bootstraps $\mathcal{S}_t^b$ from $\mathcal{S}_t$, then computes $\widehat{x}^{b}_t$ according to Eq.~(\ref{CRH_orgin_update_truth}).
		\ENDFOR
		\STATE The attacker computes $\widehat{x}_t^{boot}$ according to Eq.~(\ref{x_boot}).
		\ENDFOR
		\STATE The attacker updates the weight of each normal worker according to Eq.~(\ref{CRH_orgin_update_weight}).
		\ENDWHILE
		
		 //Update malicious workers' values.
		\WHILE {the convergence condition is not satisfied}
		\STATE The attacker computes the aggregated values and workers' weights $\{ \widehat{X}^{*}, W^*\}$ by iteratively solving Eqs.~(\ref{CRH_orgin_update_truth})-(\ref{CRH_orgin_update_weight}).
		\STATE The attacker estimates the gradient ${\nabla _{{x}_t^v}}\mathcal{L}(\widetilde{X})$ according to Eq.~(\ref{step2_upper_level}).
		\STATE The attacker updates malicious workers' values according to Eq.~(\ref{gradient_ascent}).
		\ENDWHILE
		\RETURN ${x}_t^v$ for $v \in \widetilde{U}, t \in \mathcal{T}$. 
	\end{algorithmic} 
\end{algorithm}

\subsubsection{Convergence in Distribution}

In this section, we show that for the CRH model, the aggregated value estimated by the Bootstrapping technique converges in distribution to the aggregated value computed by all the observed values at once.
We discuss it for the targeted item $t$, and it can be applied to other targeted items.
Note that in this section, we do not assume that the values provided by workers are independent and identically distributed.

We assume that the value of normal worker $u$ on item $t$ follows a normal distribution, i.e., $x_t^u \sim N\left( {x}^{\star}_t, {\sigma}_u^2 \right)$, where the variance ${\sigma}_u^2$ measures the quality of values provided by worker $u$, $u \in \mathcal{S}_t$.
We let $X_{\mathcal{S}_t}$ denote the values provided by workers in set $\mathcal{S}_t$, and let $\widehat{\theta} \left( X_{\mathcal{S}_t}  \right)  $ denote the estimator that the attacker uses.
From Eq.~(\ref{CRH_orgin_update_truth}), the estimator $\widehat{\theta} \left( X_{\mathcal{S}_t}  \right)$ of item $t$ can be computed as
$
\widehat{\theta} \left( X_{\mathcal{S}_t}  \right)=\frac{{\sum\nolimits_{u \in {\mathcal{S}_t}} {{w_u}x_t^u} }}{{\sum\nolimits_{u \in {\mathcal{S}_t}} {w_u} }}.
\label{CRH_orgin_update_truth_theta}
$
Since $x_t^u \sim N\left( {x}^{\star}_t, {\sigma}_u^2 \right)$, we have
$
\mathbb{E} \left[  \widehat{\theta} \left( X_{\mathcal{S}_t}  \right) \right] = x_t^{\star}
,$
$
\text{Var}\left(  \widehat{\theta} \left( X_{\mathcal{S}_t}  \right) \right) =\frac{{\sum\nolimits_{u \in {\mathcal{S}_t}} {{w_u^2} {\sigma}_u^2} }}  { \left( \sum\nolimits_{u \in {\mathcal{S}_t}} {w_u}  \right) ^2}.
$
We further define 
$
\widehat{ \text{Var}}\left(  \widehat{\theta} \left( X_{\mathcal{S}_t}  \right) \right)
\stackrel{\text{def}} = 
\frac{{\sum\nolimits_{u \in {\mathcal{S}_t}} {{w_u^2} \widehat{\sigma}_u^2} }}  { \left( \sum\nolimits_{u \in {\mathcal{S}_t}} {w_u}  \right) ^2},
$
where $\widehat{\sigma}_u^2 = \frac{{\sum\nolimits_{t \in \mathcal{I}_u} \left(   x_t^u -\widehat{x}_t^{boot} \right)^2 }}{\vert \mathcal{I}_u \vert - 1}$ and $\widehat{x}_t^{boot}$ can be computed by Eq.~(\ref{x_boot}). 
To measure the error between $\widehat{\theta} \left( X_{\mathcal{S}_t}  \right)$ and ${x}^{\star}_t$, we construct a statistic $Q$ as follows:
\begin{align}
Q = \frac{\widehat{\theta} \left( X_{\mathcal{S}_t}  \right) - x_t^{\star}}
{ \left[  \widehat{ \text{Var}}\left (  \widehat{\theta} \left( X_{\mathcal{S}_t}  \right) \right)  \right]   ^{1/2} /\sqrt{\vert \mathcal{S}_t \vert}}.
\label{Q_distribution}
\end{align}

Since the distribution of $Q$ is usually unknown \emph{a priori}, the attacker could leverage the Bootstrapping strategy to approximate $Q$. 
Note that in the $b$-th estimation of $\widehat{x}^{b}_t$, the attacker randomly samples a set of workers $\mathcal{S}_t^b$ in $\mathcal{S}_t$.
We let $X_{\mathcal{S}_t^b}$ and $\widehat{\theta} \left( X_{\mathcal{S}_t^b}  \right)$ denote the values provided by workers from set $\mathcal{S}_t^b$ and the estimator computed based on $X_{\mathcal{S}_t^b}$, respectively.
Then the attacker could approximate the distribution of $Q$ as follows:
\begin{align}
\widehat{Q}_b = \frac{\widehat{\theta} \left( X_{\mathcal{S}_t^b}  \right) - \widehat{\theta} \left( X_{\mathcal{S}_t}  \right)   }
{ \left[  \widehat{ \text{Var}}\left (  \widehat{\theta} \left( X_{\mathcal{S}_t^b}  \right) \right)  \right]   ^{1/2} /\sqrt{\vert \mathcal{S}_t \vert}}.
\label{Q_b_distribution}
\end{align}
Note that $\widehat{\theta} \left( X_{\mathcal{S}_t^b}  \right)$ and $\widehat{ \text{Var}}\left (  \widehat{\theta} \left( X_{\mathcal{S}_t^b}  \right) \right) $ can be computed on values $X_{\mathcal{S}_t^b}$. Theorem \ref{bootstrap_converge} states that $\widehat{Q}_b$ converges to $Q$ in distribution under the CRH model.
\begin{thm}\label{bootstrap_converge}
	Assume that $x_t^u \sim N\left( {x}^{\star}_t, {\sigma}_u^2 \right)$, where $u \in \mathcal{S}_t$.
	Let $Q$ and $Q^{\star}$ be defined as (\ref{Q_distribution}) and (\ref{Q_b_distribution}), respectively. 
	Then, for any real number $q$, we have that:
	\begin{align}
	\lim\limits_{\vert \mathcal{S}_t \vert \to \infty} \left\| \mathbb{P}^{\star}(\widehat{Q} \le q)  - \mathbb{P} (Q \le q)\right\| =0, \nonumber
	\end{align}
	where $\mathbb{P}^{\star}$ stands for the probability computed based on the bootstrapping sample distribution.
\end{thm}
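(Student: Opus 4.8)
The plan is to prove this as a classical bootstrap-consistency (bootstrap-$t$) result: I would show that the studentized statistic $Q$ and the bootstrapped statistic $\widehat{Q}$ (the latter conditionally on the observed sample $X_{\mathcal{S}_t}$) converge in distribution to one and the same limit, and then upgrade this pointwise convergence of distribution functions to the uniform (Kolmogorov-distance) statement asserted in the theorem. The final upgrade is the easy part: because the common limit is Gaussian, hence has a continuous and strictly increasing distribution function, P\'olya's theorem turns pointwise convergence of both $\mathbb{P}(Q\le q)$ and $\mathbb{P}^{\star}(\widehat{Q}\le q)$ into convergence that is uniform in $q$, so that $\sup_q \|\mathbb{P}^{\star}(\widehat{Q}\le q)-\mathbb{P}(Q\le q)\|\to 0$ as $|\mathcal{S}_t|\to\infty$.

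For the population side I would exploit the Gaussian assumption directly, which actually makes this side almost exact. Since $\widehat{\theta}(X_{\mathcal{S}_t})$ is a fixed linear combination of the independent Gaussians $x_t^u\sim N(x_t^\star,\sigma_u^2)$, it is itself exactly Gaussian with mean $x_t^\star$ and variance $\mathrm{Var}(\widehat{\theta}(X_{\mathcal{S}_t}))$; thus the numerator, normalized by the \emph{true} variance, is exactly standard normal with no appeal to a limit theorem. It then remains only to show that the estimated studentizing factor is ratio-consistent, $\widehat{\mathrm{Var}}(\widehat{\theta}(X_{\mathcal{S}_t}))/\mathrm{Var}(\widehat{\theta}(X_{\mathcal{S}_t}))\to 1$ in probability, which I would obtain from the consistency of the per-worker estimates $\widehat{\sigma}_u^2$ together with boundedness/regularity of the weights $w_u$; Slutsky's theorem then gives a standard normal limit for $Q$.

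The crux, and the step I expect to be hardest, is the bootstrap side, precisely because the workers are \emph{not} identically distributed (their variances $\sigma_u^2$ differ), so the textbook i.i.d.\ bootstrap CLT does not apply verbatim and the resampled numerator is genuinely non-Gaussian. Conditionally on $X_{\mathcal{S}_t}$, the workers drawn into $\mathcal{S}_t^b$ are i.i.d.\ draws from the empirical distribution placing mass $1/|\mathcal{S}_t|$ on each observed pair $(w_u,x_t^u)$, so $\widehat{\theta}(X_{\mathcal{S}_t^b})$ is a \emph{ratio} of two sums of conditionally i.i.d.\ summands. I would (i) establish a conditional CLT for the numerator by verifying a Lindeberg condition for the triangular array of resampled terms, which holds with probability tending to one via a truncation argument controlling the empirical second (and slightly higher) moments of $\{x_t^u\}$; (ii) treat the random denominator $\sum_{u\in\mathcal{S}_t^b}w_u$ by a law of large numbers plus a Slutsky/delta-method step, so the ratio inherits conditional asymptotic normality centered automatically at $\widehat{\theta}(X_{\mathcal{S}_t})$; and (iii) show the bootstrap studentizing variance $\widehat{\mathrm{Var}}(\widehat{\theta}(X_{\mathcal{S}_t^b}))$ is consistent for the empirical analogue using the same moment control. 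The heterogeneity of the $\sigma_u^2$ is absorbed because the empirical measure consistently estimates the same mixture of variances that appears in both the true variance and its bootstrap counterpart, which is exactly why centering at $\widehat{\theta}(X_{\mathcal{S}_t})$ makes $\widehat{Q}$ mimic $Q$.

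Assembling the pieces, both $Q$ and, conditionally on the data, $\widehat{Q}$ converge in distribution to the same standard normal law (the bootstrap statement holding either almost surely along the sample sequence or in probability, according to whether the Lindeberg verification is carried out almost surely or in probability). Since the limiting distribution function is continuous, the two pointwise limits coincide and P\'olya's theorem converts them into the uniform bound claimed, which completes the proof.
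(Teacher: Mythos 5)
Your proposal is mathematically reasonable, but it follows a genuinely different route from the paper. The paper does not prove two separate CLTs and then invoke P\'olya's theorem; instead, it verifies the four regularity conditions (existence of a non-lattice component distribution $H$, third- and higher-moment bounds $\mathbb{E}[\vert x_t^u\vert^{3+\delta_1}]\le M_1$ and $\mathbb{E}[\vert x_t^u\vert^{6+\delta_2}]\le M_2$, variance bounded away from zero, and equality of the means $\mu_u=\bar\mu$) required by the non-i.i.d.\ bootstrap theory of Liu (1988), all of which follow easily from the Gaussian assumption. It then cites that theory to write one-term Edgeworth expansions for both $\mathbb{P}(Q\le q)$ and $\mathbb{P}^{\star}(\widehat{Q}\le q)$, namely $\Phi(q)$ plus a skewness correction of order $\vert\mathcal{S}_t\vert^{-1/2}$, and concludes by showing the two skewness coefficients converge to each other, yielding $\mathbb{P}^{\star}(\widehat{Q}\le q)=\mathbb{P}(Q\le q)+O_p(\vert\mathcal{S}_t\vert^{-1/2})$. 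The trade-off is clear: the paper's expansion-based argument delivers an explicit $O_p(\vert\mathcal{S}_t\vert^{-1/2})$ rate (the hallmark second-order property of the bootstrap-$t$), whereas your Lindeberg-plus-Slutsky-plus-P\'olya argument is more elementary and self-contained but yields only $o(1)$ with no rate; conversely, your route makes the heterogeneous-variance (non-i.i.d.) difficulty explicit and handles it by hand, where the paper outsources it entirely to the cited theorem. One caveat you should repair before writing this up: you cannot appeal to ``consistency of the per-worker estimates $\widehat{\sigma}_u^2$,'' because each $\widehat{\sigma}_u^2$ is computed from the fixed, finite item set $\mathcal{I}_u$, which does not grow with $\vert\mathcal{S}_t\vert$; individual consistency fails, and the ratio-consistency $\widehat{\mathrm{Var}}(\widehat{\theta})/\mathrm{Var}(\widehat{\theta})\to 1$ must instead come from a law of large numbers \emph{across workers} (using approximate unbiasedness of $\widehat{\sigma}_u^2$ and weak dependence induced by $\widehat{x}_t^{boot}$), which requires a separate argument.
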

\begin{proof}
	For the targeted item $t$, the set of workers whose values that can be observed by the attacker is $\mathcal{S}_t$, with the set size $\vert \mathcal{S}_t \vert$.
	Let $G_u {\left(\cdot \right)}$ denote the distribution of a sample $x_t^u$.
	Since $x_t^u$ follows from a normal distribution, we have $x_t^u \sim N\left( {x}^{\star}_t, {\sigma}_u^2 \right) = G_u {\left (x_t^u \right)}$.
	As shown in \cite{liu1988bootstrap,xiao2016towards}, we need to prove the following conditions to prove Theorem \ref{bootstrap_converge}:
	\begin{enumerate} [topsep=1pt, itemsep=-.1ex, leftmargin=.3in]
	\item[I) ] There exists a non-lattice distribution $H$ with mean zero and variance one, and a sequence $k_{\vert \mathcal{S}_t \vert}$ with $\frac{k_{\vert \mathcal{S}_t \vert}}{\log {\vert \mathcal{S}_t \vert}} \! \to \! \infty$, such that $k_{\vert \mathcal{S}_t \vert}$ of the population $G_u$, $u \in \mathcal{S}_t$, are of the form $G_u{\left(x \right)} = H{\left(\frac{x-\mu_u}{{\sigma}_u}  \right)}$ with ${\sigma}_u$ and $u \in \mathcal{S}_t$, bounded away from 0;
	\item[II) ] There exists an $M_1>0$ such that $\mathbb{E} \left[\vert x_t^u \vert ^{3+\delta _1} \right] \le M_1 < \infty$ for some $\delta _1 >0$;
	\item[III) ] $\liminf\limits_{\vert \mathcal{S}_t \vert \to \infty} \xi^2 >0$ and $\frac{1}{\vert \mathcal{S}_t \vert} \sum\nolimits_{u \in \mathcal{S}_t} {\left(\mu_u-\bar{\mu} \right)}^2 = o\left({\vert \mathcal{S}_t \vert}^{-1/2} \right) $;
	\item[IV) ] $H$ is continuous and there exists an $M_2>0$ such that for some $\delta_2 >0$, we have $ \mathbb{E} \left[\vert x_t^u \vert ^{6+\delta_2} \right] \le M_2 < \infty $,
	\end{enumerate}
	where $\mu_u = {x}^{\star}_t$, $\xi^2 = \frac{1}{\vert \mathcal{S}_t \vert}  \sum\nolimits_{u \in \mathcal{S}_t} {\sigma}_u^2$, $\bar{\mu}=\frac{1}{\vert \mathcal{S}_t \vert}  \sum\nolimits_{u \in \mathcal{S}_t} \mu_u$.

	\emph{Proof of I:}
	We let $H$ be the standard normal distribution, i.e., $H=N(0,1)$. 
	Then $H$ is a non-lattice distribution since any continuous distribution is non-lattice.
	If we let $k_{\vert \mathcal{S}_t \vert} = \vert \mathcal{S}_t \vert$ and $G_u{\left(x \right)} = H{\left(\frac{x-\mu_u}{{\sigma}_u}  \right)}$, then we have $\frac{\vert \mathcal{S}_t \vert}{\log {\vert \mathcal{S}_t \vert}} \to \infty $ as $\vert \mathcal{S}_t \vert$ increases.

	\emph{Proof of II:} According to the moments of a normal distribution, we have
	$
	\mathbb{E} \left[\vert x_t^u \vert ^k \right] = \sigma _u^{k}\frac{{{2^{\frac{k}{2}}}\Gamma \left( {\frac{{k + 1}}{2}} \right)}}{{\sqrt \pi  }}, \nonumber
	$
	where $k$ is any non-negative integer, $\Gamma(\cdot)$ is the gamma function.
	If we let $\delta _1 =1$, we have $\mathbb{E} \left[\vert x_t^u \vert ^4 \right] = \sigma _u^4\frac{{4\Gamma \left( {\frac{5}{2}} \right)}}{{\sqrt \pi  }} \stackrel{(a)}= 3\delta_u^4 = M_1 \le \infty$, where (a) follows from $\Gamma\left(\frac{5}{2}\right)=\frac{3}{4}\sqrt{\pi}$, which can be shown by the Legendre duplication formula that $\Gamma(z)\Gamma(z+\frac{1}{2})=2^{1-2z}\sqrt{\pi}\Gamma(2z)$ with $z=2$.

	\emph{Proof of III:} 
	${\sigma}_u^2>0$, $\mu_u = \bar{\mu}$, which completes the proof.

	\emph{Proof of IV:} Since $H=N(0,1)$, then $H$ is continuous.
	If we let $\delta_2 = 2$, then we have $\mathbb{E} \left[\vert x_t^u \vert ^{6+\delta_2} \right] = \sigma _u^8\frac{{2^{\frac{8}{2}} \Gamma \left( {\frac{9}{2}} \right)}}{{\sqrt \pi  }} = 105\delta_u^8 = M_2 < \infty$.

	Since all four conditions above are satisfied, we obtain:
	\begin{align}
	\mathbb{P} (Q \le q) = \Phi (q) + \frac{\beta_1}{6{\beta_2}\sqrt {\vert \mathcal{S}_t \vert} } \left(2q^2 +1 \right) \phi(q) + o\left({\vert \mathcal{S}_t \vert}^{-1/2} \right), \nonumber \\ 
	\mathbb{P}^{\star} (\widehat{Q} \le q) = \Phi (q) + \frac{\beta_3}{6{\beta_4}\sqrt {\vert \mathcal{S}_t \vert} } \left(2q^2 +1 \right) \phi(q) + o\left({\vert \mathcal{S}_t \vert}^{-1/2} \right), \nonumber
	\end{align}
	where $\Phi (q) = \frac{1}{{2\pi }}{\int_{ - \infty }^q e ^{ - {t^2}/2}}dt$,
	 $\phi(\cdot) $ is the derivative of $\Phi(\cdot) $
	(i.e., $\phi(\cdot) = \Phi^{\prime}  (\cdot)$),
	%
	$\beta_1 \!= \! \frac{1}{\vert \mathcal{S}_t \vert}  \sum\nolimits_{u \in \mathcal{S}_t} \! \mathbb{E}_{G_u} \! \left[{\left(x_t^u - \mu_u \right)}^3  \right] $,
	$\beta_2 \!= \! \frac{1}{\vert \mathcal{S}_t \vert}  \sum\nolimits_{u \in \mathcal{S}_t} \! {\sigma}_u^3$,
	$\beta_3 = \frac{1}{\vert \mathcal{S}_t \vert} \sum\nolimits_{u \in \mathcal{S}_t} {\left(x_t^u-  \rho \right)}^3 $,
	$\beta_4 = \zeta^3$,
	$ \rho = \frac{1}{\vert \mathcal{S}_t \vert}  \sum\nolimits_{u \in \mathcal{S}_t} x_t^u$,
	$\zeta=\sqrt{ \left(1/{\vert \mathcal{S}_t \vert}\right) \sum\nolimits_{u \in \mathcal{S}_t} {\left(x_t^u-  \rho \right)}^2 }$.
	Proof of II and III also show that $\beta_3 - \beta_1 \to 0$ as $\vert \mathcal{S}_t \vert \to \infty$, so we have $\mathbb{P}^{\star} (\widehat{Q} \le q) = \mathbb{P} (Q \le q) + O_p \left({\vert \mathcal{S}_t \vert}^{-1/2}  \right) $ 
	\footnote{
	 $X_n = O_p(Y_n)$ means ${X_n}/{\left\| {Y_n} \right\|}$ is bounded in probability, where $X_n$ and $Y_n$ are random sequences taking values in any normed vector spaces.}.
	 The proof is complete.
\end{proof}


\section{Experiments} \label{sec:exp}

\subsection{Experimental Setup}

\subsubsection{Datasets} 
We first use a synthetic dataset to demonstrate the effectiveness of the proposed attack methods. 
In this synthetic dataset, there are 50,000 values in total on 4,000 items generated by 500 workers.
We assume that the value of worker $u$ on item $i$ follows a normal distribution $x_i^u \sim N\left( \mu_i, {\sigma}_u^2 \right)$, where $\mu_i$ is the ground truth of item $i$, ${\sigma}_u^2$ is the reliability of worker $u$. 
In the experiment, $\mu_i$ and ${\sigma}_u$ are generated from uniform distributions $\text{Uniform}(20, 30)$ and $\text{Uniform}(0,30)$, respectively.

To further demonstrate the advantages of the proposed attack methods, we also conduct experiments on two real-world continuous datasets, which are widely used for evaluating crowdsourcing systems. 
The first real-world dataset is \emph{Emotion}~\cite{snow2008cheap}, where the workers in this dataset need to assign a value from the interval [-100, 100] to some texts, indicating the degree of emotion (e.g., surprise) of the text. 
The second real-world dataset is \emph{Weather}~\cite{weatherdata_url}, which contains temperature forecast information for 88 major US cities collected from HAM weather \cite{AerisWeather_url}, Weather Underground (Wunderground) \cite{wunderground_url}, and World Weather Online (WWO) \cite{WorldWeatherOnline_url}.
The statistics of the three datasets are shown in Table~\ref{data_stat}.

\begin{table}[htbp]
	\centering
	\caption{Dataset statistics.}
	\label{data_stat}%
	\small
	\begin{tabular}{cccc}
		\hline
		Dataset & \# Workers & \# Items & \# Values \\
		\hline
		Synthetic & 500   & 4,000 & 50,000 \\
		Emotion & 38   & 700 & 7,000 \\
		Weather & 152   & 7,568 & 936,989 \\
		\hline
	\end{tabular}%
\end{table}%

\subsubsection{Attack Variants} We test and compare two variants of our proposed attack models, namely:

\myparatight{Full-knowledge attack} The attacker in this attack model is able to inject a set of malicious workers into the crowdsourcing systems. The attacker has full knowledge of the targeted system and sets values of the injected malicious workers according to Algorithm \ref{full_know_poisoning_attack_alg}.

\myparatight{Partial-knowledge attack} The attacker in this attack model is able to inject a set of malicious workers into the crowdsourcing systems. The attacker has partial knowledge of the targeted system and sets values of malicious workers according to Algorithm \ref{partial_know_poisoning_attack_alg}.

\subsubsection{Comparison of Attacks} To demonstrate the effectiveness of our proposed attacks, we compare our attack methods with the following methods.

\myparatight{Random attack} In this attack, for targeted item $t$, each malicious worker randomly assigns a number from the range $\left[ x_t^{\min}, x_t^{\max} \right]$ as the value for item $t$, where $x_t^{\min}$ and $x_t^{\max}$ are the minimum and maximum values on item $t$ provided by normal workers, respectively.

\myparatight{Maximum attack} In this attack model, for targeted item $t$, each malicious worker provides the maximum value $x_t^{\max}$ as the value for item $t$.

\subsubsection{Evaluation Metric} In order to measure the effectiveness of different attack models, we use the average estimation error defined in Eq.~(\ref{attack_goal}) as our evaluation metric.
Since the goal of the attack is to maximize the error of the aggregation results after attack, the larger the estimation error, the better the attack model.

\subsubsection{Parameter Setting}

Assume the attack size is $\alpha$ (i.e., the number of malicious workers is $\alpha$ fraction of the number of the total workers, $\alpha = \frac{\vert \mathcal{\widetilde{U}} \vert}{\vert \mathcal{U} \vert + \vert \mathcal{\widetilde{U}} \vert} $), and we can inject $\left \lfloor \frac{{\alpha \vert \mathcal{U} \vert }}{{1 - \alpha }} \right\rfloor$ malicious workers into the crowdsourcing systems. 
Then for a targeted item $i$, we random select $\left \lfloor \frac{{\alpha \vert \mathcal{U}_i \vert }}{{1 - \alpha }} \right \rfloor $ out of $\left \lfloor \frac{{\alpha \vert \mathcal{U} \vert }}{{1 - \alpha }} \right \rfloor$ malicious workers to attack item $i$.
In this setting, it is guaranteed that for each targeted item, the majority workers are normal workers.

Unless stated otherwise, we use the following default parameter setting:
We randomly select some items as targeted items and each targeted item is rated by at least 10 workers.
The numbers of targeted items are set to 400, 60 and 100 for Synthetic, Emotion and Weather datasets, respectively.
We let $B=500$.
We repeat each experiment for 50 trials and report the average
results.
All distance functions used in the experiments are squared distance.

\begin{figure}[!t]
	\centering
	\includegraphics[scale = 0.35]{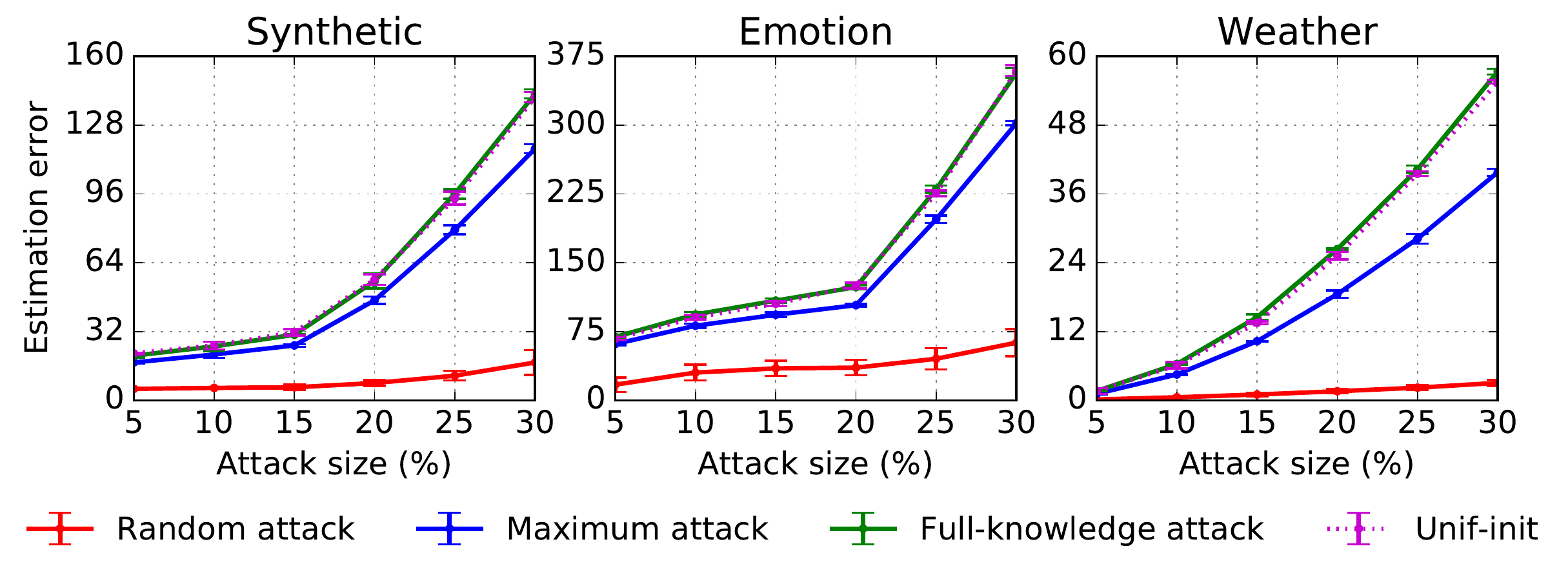}
	\caption{ Estimation error with respect to different attack sizes when attacking the CRH model.}\label{Utility_attack_size}
\end{figure}

\begin{figure}[!t]
	\centering
	\includegraphics[scale = 0.35]{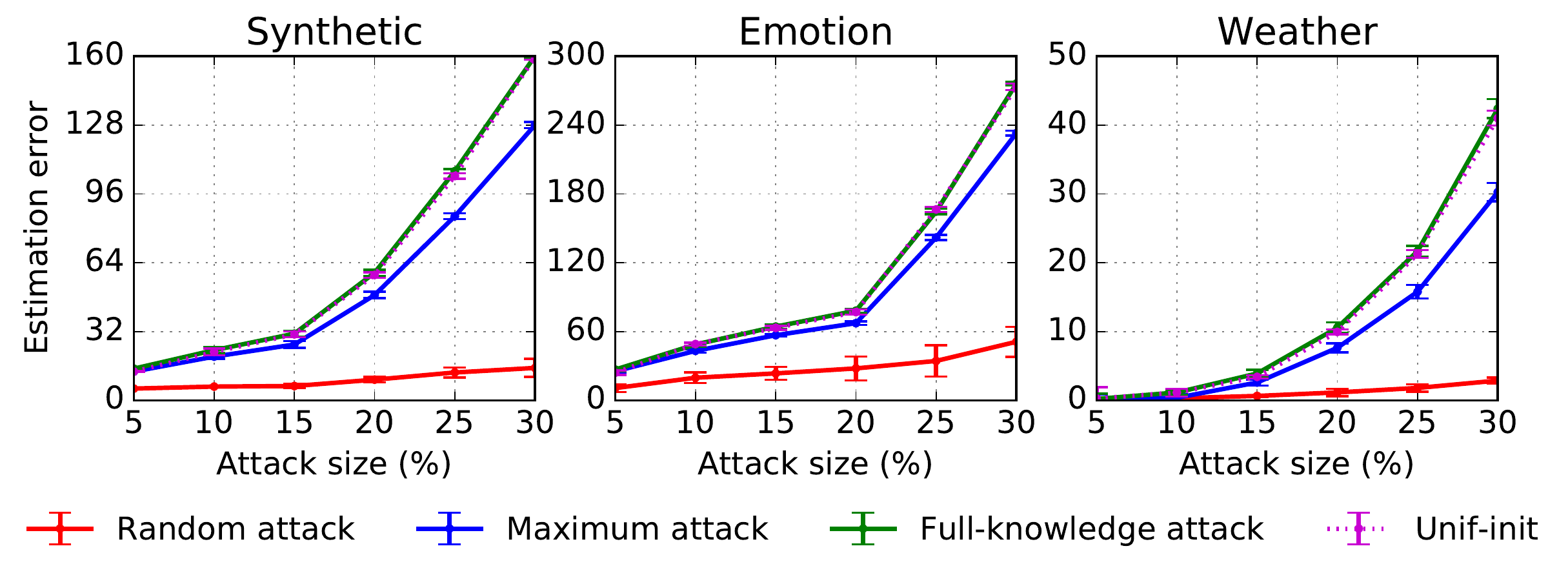}
	\caption{ Estimation error with respect to different attack sizes when attacking the GTM model.}\label{Utility_attack_size_GTM}
\end{figure}

\begin{figure}[!t]
	\centering
	\includegraphics[scale = 0.35]{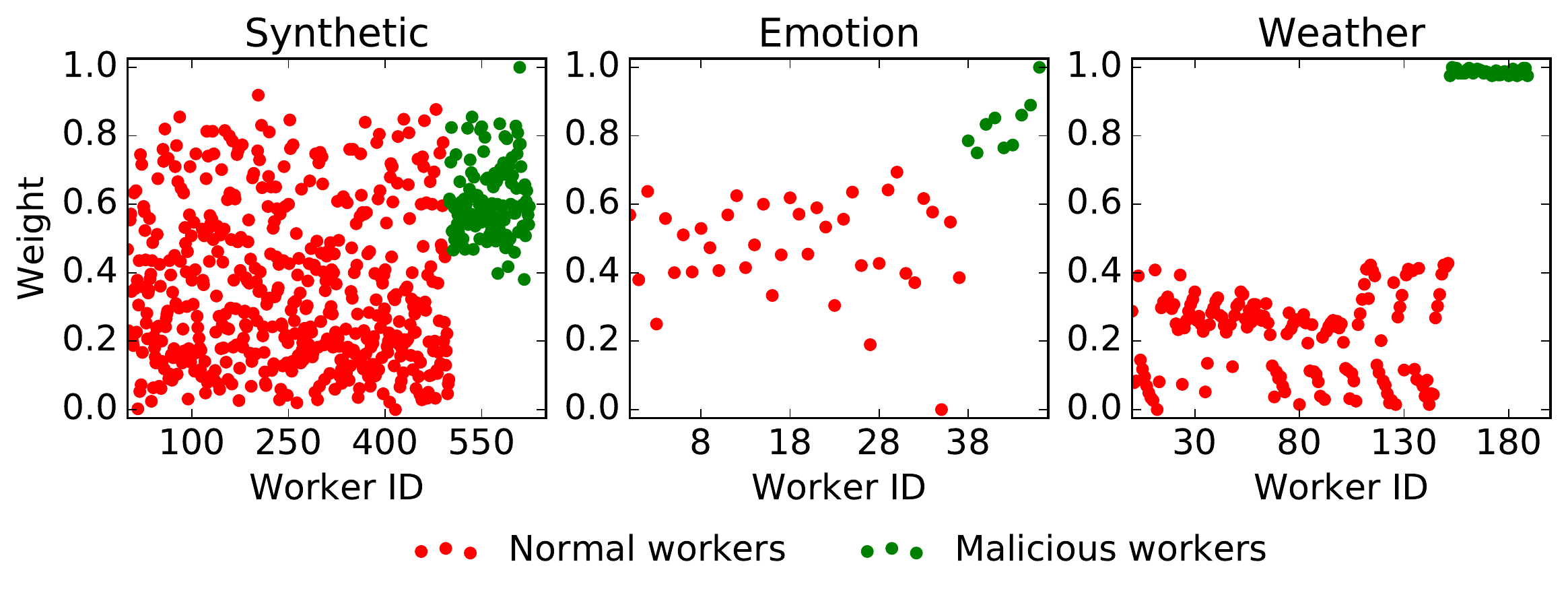}
	\caption{The weights of normal and malicious workers under maximum attack  when attacking the CRH model.}\label{max_weight}
\end{figure}

\begin{figure}[!t]
	\centering
	\includegraphics[scale = 0.35]{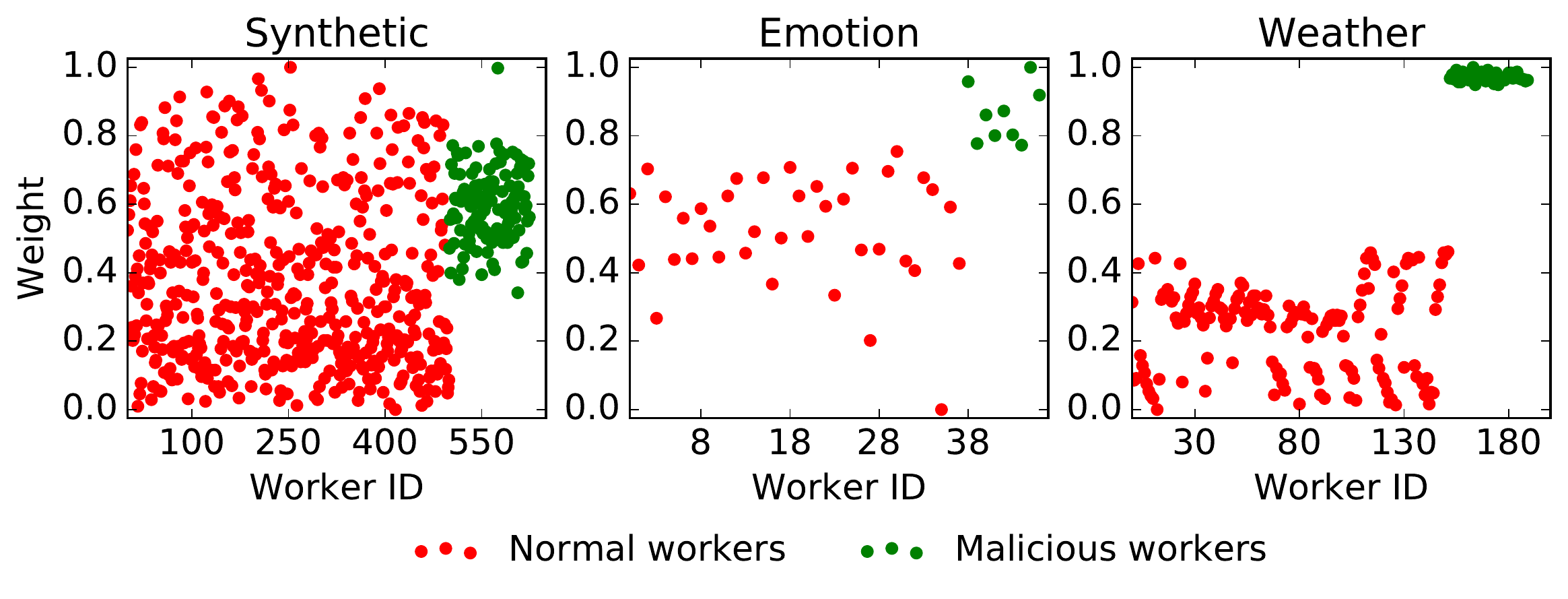}
	\caption{The weights of normal and malicious workers under full-knowledge attack when attacking the CRH model.}\label{full_weight}
\end{figure}

\begin{figure}[!t]
	\centering
	\includegraphics[scale = 0.35]{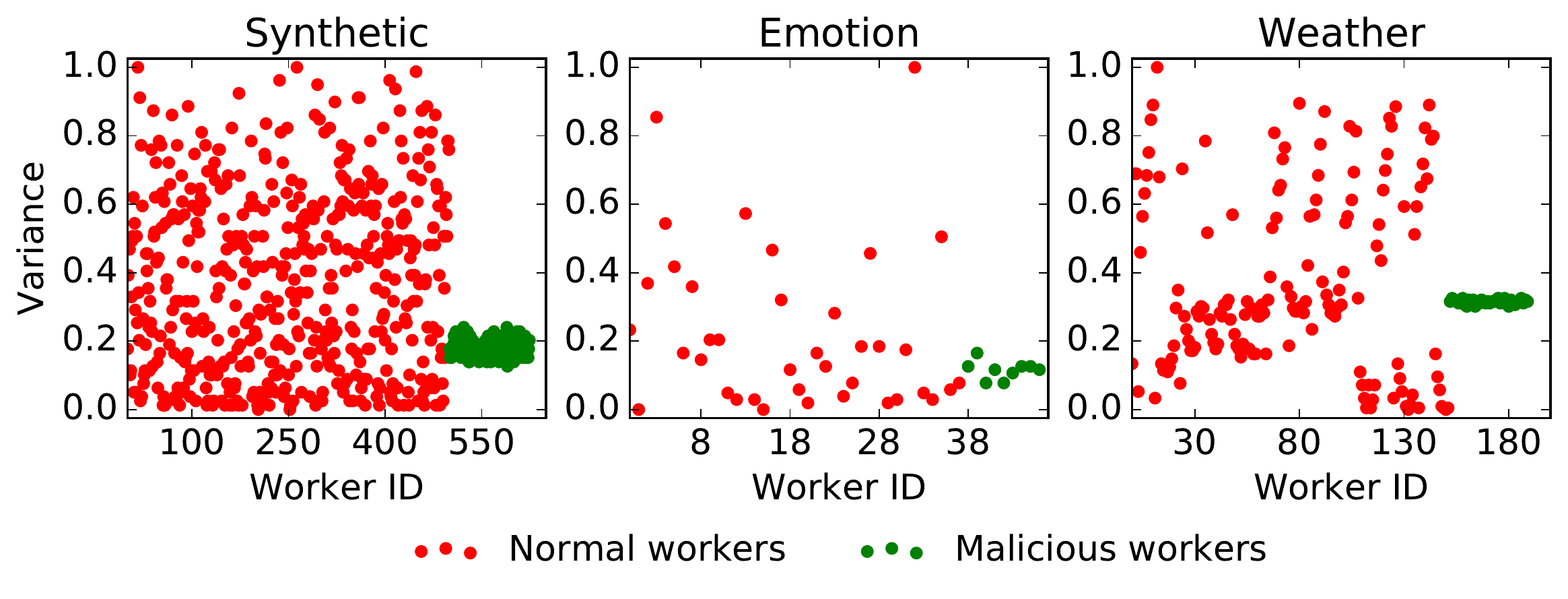}
	\caption{The variances of normal and malicious workers under maximum attack when attacking the GTM model.}\label{max_weight_GTM}
\end{figure}

\begin{figure}[!t]
	\centering
	\includegraphics[scale = 0.35]{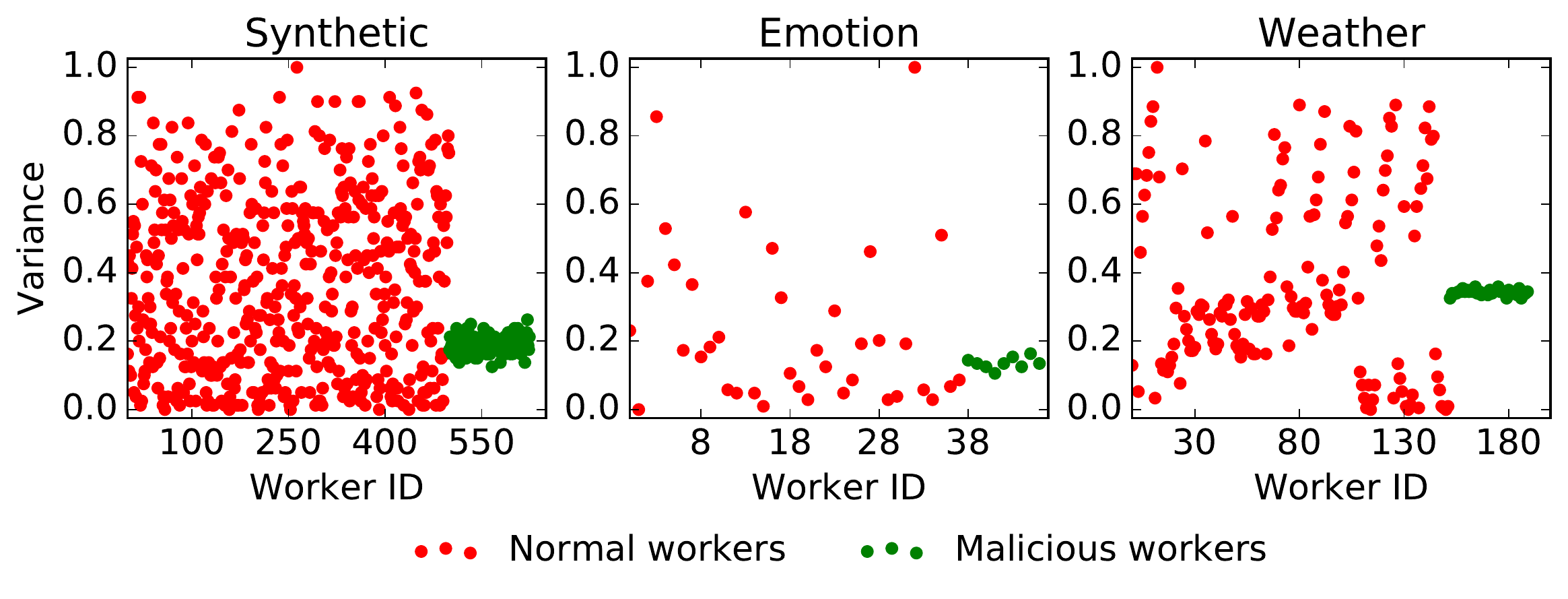}
	\caption{The variances of normal and malicious workers under full-knowledge attack when attacking the GTM model.}\label{full_weight_GTM}
\end{figure}

\subsection{Full-Knowledge Attack Evaluation}

For the optimization-based attack strategy, we first consider the full-knowledge attack, where the attacker knows the aggregation algorithm used in crowdsourcing systems (the CRH and GTM methods) and all values provided by normal workers.

\myparatight{Impacts of the attack size}
Figures~\ref{Utility_attack_size}-\ref{Utility_attack_size_GTM} show the average estimation errors of different attacks as the attack size (percentage of malicious workers) increases on three datasets, where the bar is standard deviation.
``Unif-init'' in Figures~\ref{Utility_attack_size}-\ref{Utility_attack_size_GTM} means for our full-knowledge attack algorithm, the server initializes the CRH/GTM model uniformly at random, and the initial weights/variances are drawn from the uniform distribution Uniform(2, 3). The attacker also initializes the attack model uniformly at random, but the initial weights/variances are drawn from the uniform distribution Uniform(1, 3).
%
First, we observe that our attack is effective in terms of inducing large estimation errors.
For instance, in the Emotion dataset, the attacker increases the estimation error to 93.69 by injecting 10\% of malicious workers for the CRH model.
Second, our proposed attack outperforms the baselines.
The reasons are as follows:
First, random and maximum attacks are general attack models and not optimized for CRH-based nor GTM-based truth discovery methods. 
Thus, their attack performance are not satisfactory.
Second, our proposed attack model takes the malicious workers' reliability into consideration. 
Specifically, our attack model abandons some targeted items when there is little chance to increase the aggregation error.
Thus, the malicious workers behave similarly with the majority of normal workers. 
By doing so, the crowdsourcing system may consider the malicious workers as normal workers and increase/decreases their weights/variances, which indirectly increases/decreases these malicious workers' weights/variances on other targeted items.
We also find that our attack increases the aggregation error significantly when we inject more malicious workers.
By contrast, random attack only slightly increases the estimation error. 
Another interesting finding is that even though the server and attacker adopt different ways to initialize the workers and not all workers’ initial weights/variances are equal, it does not affect the effectiveness of our proposed attack model.
From Figures~\ref{Utility_attack_size}-\ref{Utility_attack_size_GTM}, we observe that the standard deviations are very small, so we report the average results in the remaining experiments.

\myparatight{Comparisons between weights/variances of normal workers and malicious workers}
The CRH/GTM model uses the weights or variances to capture the workers’ quality. The key intuition of the CRH/GTM is that a worker should be assigned with a higher weight or lower variance if his values are closer to the estimated results.
In this experiment, we investigate the weight and variance distributions for both normal and malicious workers, the attack size is set to 20\%. 
As CRH and GTM models use different ways to measure the reliability of workers, we leverage min-max normalization technique to normalize reliability scores (weights and variances) into the range [0, 1].
The results are shown in Figures~\ref{max_weight}-\ref{full_weight_GTM}.
From Figure~\ref{full_weight} and Figure~\ref{full_weight_GTM}, we find that the malicious workers generated by our proposed full-knowledge attack have higher weights or smaller variances comparing with the normal workers.
This means that the malicious workers successfully blend into normal workers and it is hard to distinguish normal and malicious workers based on the weights/variances under our attack strategy.

\begin{figure}[!t]
	\centering
	\includegraphics[scale = 0.35]{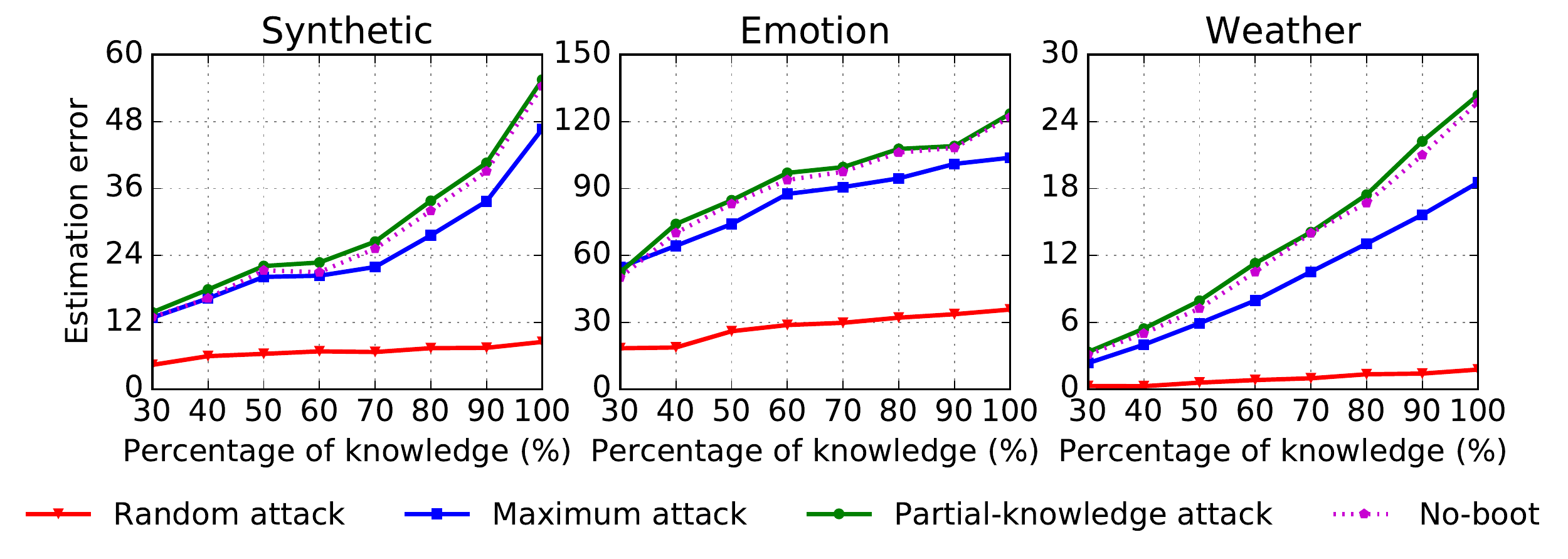}
	\caption{Estimation error with respect to the percentage of knowledge known by the attacker when attacking the CRH model.}\label{Utility_attack_partial}
\end{figure}

\begin{figure}[!t]
	\centering
	\includegraphics[scale = 0.35]{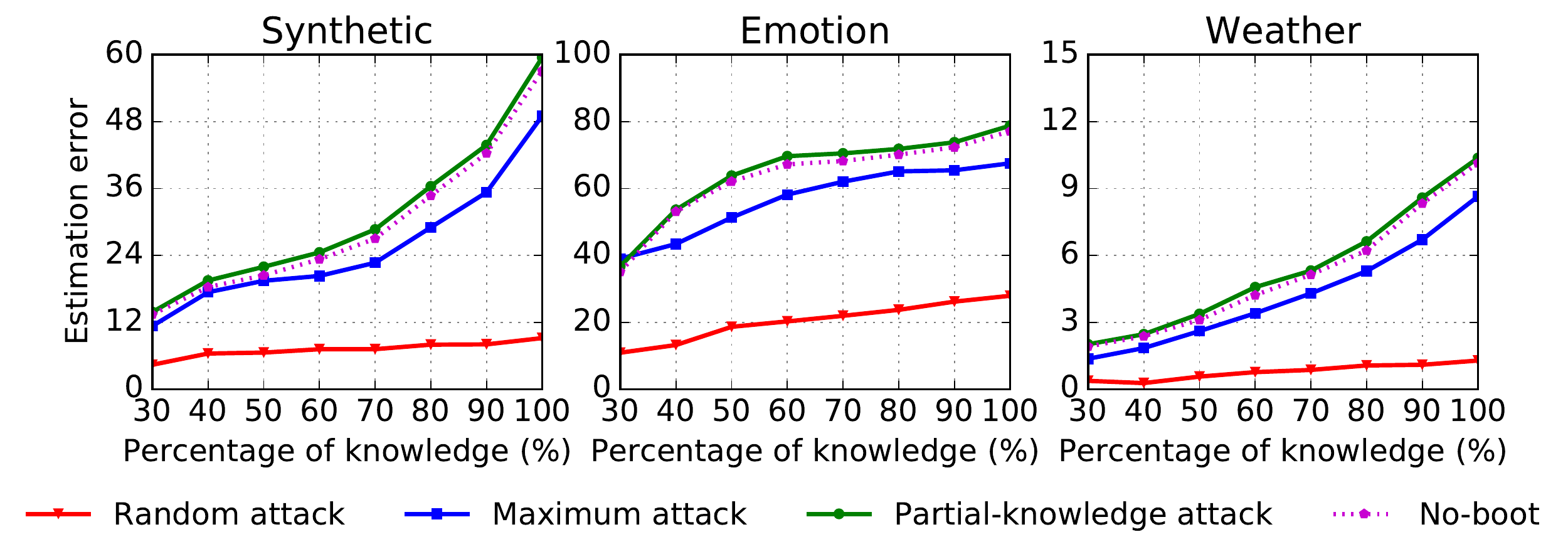}
	\caption{Estimation error with respect to the percentage of knowledge known by the attacker when attacking the GTM model.}\label{Utility_attack_partial_GTM}
\end{figure}

\subsection{Partial-Knowledge Attack Evaluation}

The amount of values can be accessed by the attacker is another important factor in the attack.
Figures~\ref{Utility_attack_partial}-\ref{Utility_attack_partial_GTM} show the results when the attacker only observes a portion of the values provided by normal workers on targeted items, where the attack size is set to be 20\%.
The percentage of knowledge in Figures~\ref{Utility_attack_partial}-\ref{Utility_attack_partial_GTM} represents the fraction of values provided by normal workers that can be observed by the attacker given a targeted item.
``No-boot'' means the attacker also sets values of malicious workers according to Algorithm \ref{partial_know_poisoning_attack_alg}. However, instead of leveraging the Bootstrapping technique to estimate the before-attack values, the attacker estimates the before-attack values using all observed values at once (without Bootstrapping).
Note that in the partial-knowledge attack, the attacker generates the values of malicious workers based only on the observed data.
We find that as the attacker has access to more data provided by normal workers, the estimation error increases (i.e., better attack performance).
We also find that our method achieves better attack performance than the baselines in most cases.
The reason is that our proposed partial-knowledge attack uses the Bootstrapping technique to combine estimated values from multiple bootstrapped values, rather than using all known values at once.
This leads to a more accurate value estimation that further slightly enhances the attack performance.


\section{Defenses} \label{sec:defense}

In this section, we propose two defense mechanisms to mitigate the impacts of poisoning attacks on crowdsourcing systems.
The basic idea in our defense mechanism design is to arm the crowdsourcing systems with malicious workers detection capability.

\subsection{Median-of-Weighted-Average Defense} 
\label{sec:MWA_defene}

Although the CRH and GTM models aim to provide robust aggregated results by assigning a larger weight or smaller variance to a worker if this worker's values are closer to the aggregated results, both the CRH and GTM models remain vulnerable to adversarial attacks.
To defend potential data poisoning attacks, we design a defense strategy that satisfies two goals:
1) similar to the CRH and GTM models, the server takes the quality of workers into account, and 2) the server should be resilient to potential poisoning attacks.
To achieve these goals, we propose the Median-of-Weighted-Average (MWA) defense.
In our MWA defense, the server is not aware whether the crowdsourcing system is being attacked.

Since GTM model can only handle continuous labels, while CRH model can deal with both categorical and continuous labels.
Thus, in the MWA defense, the server uses a weight parameter to capture a worker's reliability and updates the weights of workers the same way as the CRH model, i.e., weights are updated according to Eq.~(\ref{CRH_orgin_update_weight}). 
However, instead of updating the values based on Eq.~(\ref{CRH_orgin_update_truth}) directly, the server uses the following three steps to estimate the value for each item:
%
1) the server first sorts workers in ascending order according to the values provided by workers for this item, then partitions the workers (normal and malicious workers) who observe this item into $L$ groups;
%
2) the server then computes the weighted average of values in each group; and 3) the server takes the median of $L$ values as the estimated value for this item. 
For each item $i \in \mathcal{I}$, the estimated value $\widehat{x}_i^{*}$ can be computed as:
\begin{align}
\widehat{x}_i^{*}= \text{Median} 
\left(
\frac{{\sum\nolimits_{u \in {\mathcal{M}_i^1}} {{w_u}x_i^u}  } }
{{\sum\nolimits_{u \in {\mathcal{M}_i^1}} {w_u} }},...,
\frac{{\sum\nolimits_{u \in {\mathcal{M}_i^L}} {{w_u}x_i^u}  } }
{{\sum\nolimits_{u \in {\mathcal{M}_i^L}} {w_u} }}
\right),
\label{MWA_update_truth}
\end{align}
where $\mathcal{M} = \mathcal{U} \cup \mathcal{\widetilde{U}}$ is the set of all workers, $\mathcal{M}_i^l$, $l=1,...,L$, is the set of workers who observe item $i$ in the $l$-th group. 
The MWA defense is summarized in Algorithm~\ref{MWA-defense}.
In our proposed defense mechanisms, we also assume that all workers are given equal initial weights.

\begin{algorithm}[t!]
	\caption{{The Median-of-Weighted-Average (MWA) defense.}}
	\label{MWA-defense}
	\begin{algorithmic}[1]
		\renewcommand{\algorithmicrequire}{\textbf{Input:}}
		\renewcommand{\algorithmicensure}{\textbf{Output:}}
		\REQUIRE Values from all workers ${x_i^u}$ for $u \in \mathcal{M}, i \in \mathcal{I}$.
		\ENSURE  Aggregated values $X^{*}$ and worker weights $W$.
		\STATE Server initializes the workers’ weights.
		\WHILE {the convergence condition is not satisfied}
		\STATE For each item, the server partitions workers who observe this item into $L$ groups, then updates the aggregated value according to Eq.~(\ref{MWA_update_truth}).
		\STATE Server updates the weight of each worker according to Eq.~(\ref{CRH_orgin_update_weight}).
		\ENDWHILE
		\RETURN $X^{*}$ and $W$. 
	\end{algorithmic} 
\end{algorithm}

\subsection{Maximize Influence of Estimation Defense}

We note that, under the MWA defense, malicious workers still exist in the crowdsourcing systems.
In this section, we propose another defense mechanism to detect the malicious workers and remove them from the crowdsourcing systems.
However, this defense mechanism requires a \emph{stronger} assumption that the server knows the crowdsourcing system is being attacked and the goal of the attacker.
The server also knows there exists $\left \lfloor \alpha \vert \mathcal{M} \vert \right \rfloor$ number of malicious workers in the system, but the server does not know which items are being attacked.
Here, we propose the Maximize Influence of Estimation (MIE) defense to detect the malicious workers in the targeted systems.

\begin{algorithm}[!t]
	\caption{{Greedy influential worker selection.}}\label{Find_Influential_work_Set}
	\begin{algorithmic}[1]
		\renewcommand{\algorithmicrequire}{\textbf{Input:}}
		\renewcommand{\algorithmicensure}{\textbf{Output:}}
		\REQUIRE  Values from all workers ${x_i^u}$ for $u \in \mathcal{M}, i \in \mathcal{I}$.
		\ENSURE  Influential worker set $\mathcal{A}$.
		\STATE Initialize $\mathcal{A} = \emptyset $.
		\WHILE {$\mathcal{|A|} < \lfloor \alpha \vert \mathcal{M} \vert \rfloor$}
		\STATE Select $u = \arg {\max _{k \in \mathcal{M} \setminus \mathcal{A}}} \varphi(k,\mathcal{I})$.
		\STATE $\mathcal{A} \leftarrow \mathcal{A} \cup \{u\}$.
		\ENDWHILE
		\RETURN $\mathcal{A}$. 
	\end{algorithmic} 
\end{algorithm}

\begin{algorithm}[t!]
	\caption{The Maximize Influence of Estimation (MIE) defense.}
	\label{MIU-defense}
	\begin{algorithmic}[1]
		\renewcommand{\algorithmicrequire}{\textbf{Input:}}
		\renewcommand{\algorithmicensure}{\textbf{Output:}}
		\REQUIRE Values from all workers ${x_i^u}$ for $u \in \mathcal{M}, i \in \mathcal{I}$.
		\ENSURE  Aggregated values $X^{*}$ and worker weights $W$.
		\STATE Server initializes the workers’ weights.
		\STATE Server finds the influential worker set $\mathcal{A}$ according to Algorithm~\ref{Find_Influential_work_Set}.
		\STATE Server removes workers in the set $\mathcal{A}$ from the crowdsourcing systems.
		\WHILE {the convergence condition is not satisfied}
		\STATE Server updates the aggregated value of each item with the remaining workers according to Eq.~(\ref{CRH_orgin_update_truth}).
		\STATE Server updates the weights of the remaining workers according to Eq.~(\ref{CRH_orgin_update_weight}).
		\ENDWHILE
		\RETURN $X^{*}$ and $W$. 
	\end{algorithmic} 
\end{algorithm}

\begin{figure}[!t]
	\centering
	\includegraphics[scale = 0.35]{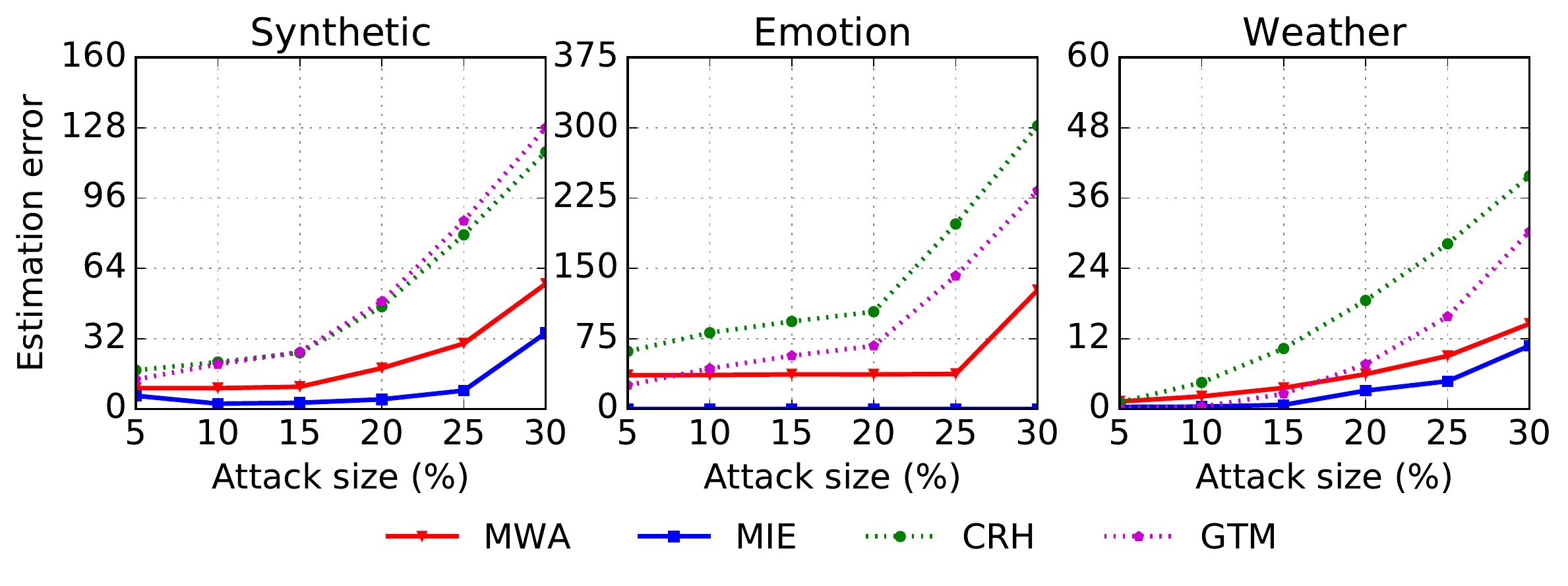}
	\caption{Estimation error of maximum attack when attacking the malicious-worker-aware crowdsourcing systems.}\label{defense_max_attack}
\end{figure}

\begin{figure}[!t]
	\centering
	\includegraphics[scale = 0.35]{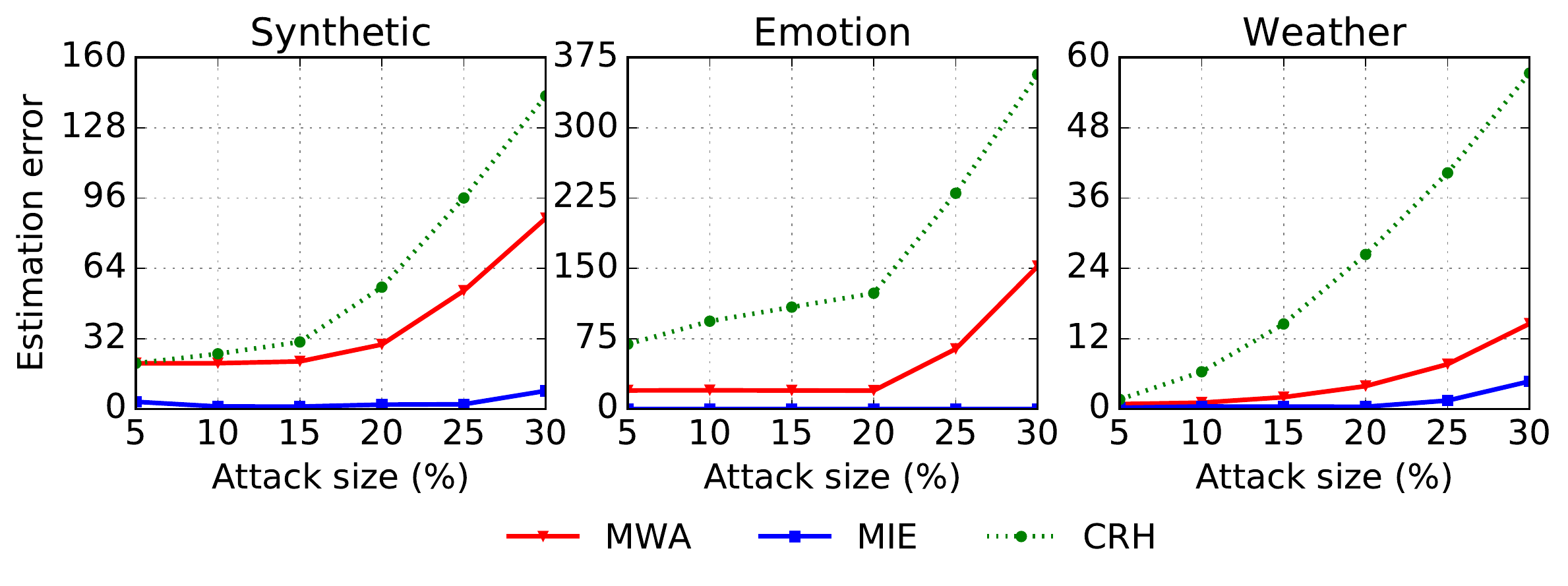}
	\caption{Estimation error of full-knowledge attack when attacking the malicious-worker-aware crowdsourcing systems, where the malicious workers are generated by the full-knowledge attack algorithm for the CRH model.}\label{defense_full_attack}
\end{figure}

\begin{figure}[!t]
	\centering
	\includegraphics[scale = 0.35]{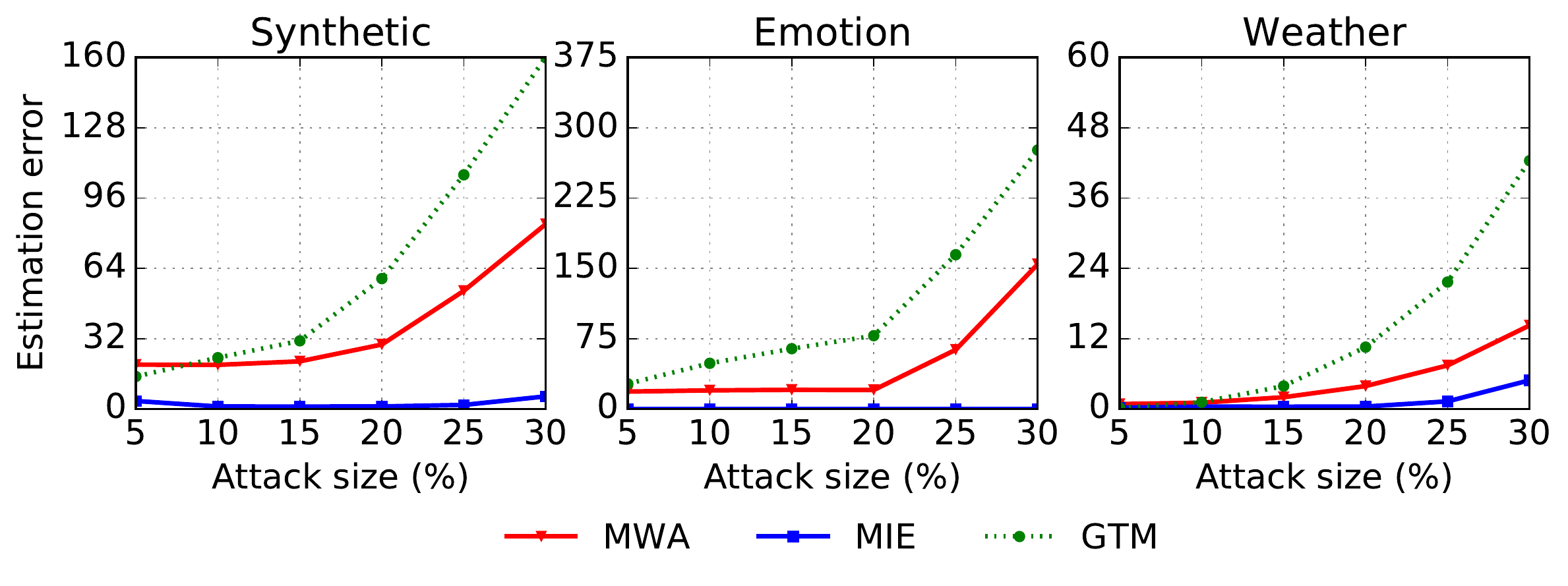}
	\caption{Estimation error of full-knowledge attack when attacking the malicious-worker-aware crowdsourcing systems, where the malicious workers are generated by the full-knowledge attack algorithm for the GTM model.}\label{defense_full_attack_GTM}
\end{figure}

For the MIE defense, we let $\mathbb{I}(\mathcal{A},\mathcal{I})$ denote the influence of removing workers in the set $\mathcal{A}$ on the estimation over all items in $\mathcal{I}$, where the influence here is defined as the change of estimated value. 
The server wants to find a set of influential workers that have the largest influence on all items in $\mathcal{I}$. 
The influence maximization defense problem can be formulated as:
\begin{align}
\label{influ_opt}
\text{Maximize}   \medspace\medspace \textstyle \mathbb{I}(\mathcal{A},\mathcal{I}),    \quad 
\text{subject to }   \medspace \vert {\mathcal{A}} \vert  =  \left \lfloor   \alpha \vert \mathcal{M} \vert \right \rfloor,
\end{align}

However, this combinatorial influence maximization problem is NP-hard \cite{kempe2003maximizing} in general. In order to solve Problem (\ref{influ_opt}), we first show how to quantify the influence of one worker, then we will show how to find a subset of $ \left \lfloor \alpha  \vert \mathcal{M} \vert \right \rfloor$ workers with the maximum influence.
We define $\varphi(u,\mathcal{I})$ as the influence of removing worker $u \in \mathcal{M}$ on the estimation over the targeted crowdsourcing system:
\begin{align}
\label{influence_user_target}
\varphi(u,\mathcal{I})  \stackrel{\text{def}} = \frac{1}{\left| \mathcal{I}_u \right|} \sum\nolimits_{i \in {\mathcal{I}}} 
d(\widehat{x}^{*}_i (\mathcal{M}),  \widehat{x}^{*}_i (\mathcal{M} \setminus \left\{ u \right\})),
\end{align}
where $\widehat{x}^{*}_i (\mathcal{M})$ represents the after-attack estimated value for item $i$ computed over all workers in set $\mathcal{M} = \mathcal{U} \cup \mathcal{\widetilde{U}}$, the distance function $d(\cdot)$ is squared distance, $\left| \mathcal{I}_u \right|$ is the number of items rated by worker $u$.
Therefore, the influence of removing workers from some set $\mathcal{A}$ on the estimation over the targeted system can be defined as the sum of the influence of individual worker in the set $\mathcal{A}$:
\begin{align}
\label{influ_set_S}
\mathbb{I}(\mathcal{A},\mathcal{I})  \stackrel{\text{def}} = \sum\nolimits_{u \in \mathcal{A}} {\varphi(u,\mathcal{I})}.
\end{align}

We can see that the set influence $\mathbb{I}(\mathcal{A},\mathcal{I})$ can be naturally computed based on the worker influence $\varphi(u,\mathcal{I}) $.
Note that even though the worker influence $\varphi(u,\mathcal{I})$ of Eq.~(\ref{influence_user_target}) shares some similarity with attacker's goal of Eq.~(\ref{attack_goal}), they have different meanings.
In Eq.~(\ref{attack_goal}), the attacker computes the average estimation errors of targeted items before and after attack, while the server in Eq.~(\ref{influence_user_target}) measures the change of estimated values of all items before and after removing one worker from the crowdsourcing systems.

\subsubsection{Approximation Algorithm to Determine $\mathcal{A}$} \label{sec:Approximating_influential_workers_set}

Although solving Problem (\ref{influ_opt}) is hard, we could design a greedy selection algorithm to approximately find a solution to Eq.~(\ref{influ_opt}) by leveraging the submodular property of influence $\mathbb{I}(\mathcal{A},\mathcal{I})$, which is stated in Theorem~\ref{monotonically_non_decreasing} as follows:
\begin{thm}\label{monotonically_non_decreasing}
	The influence $\mathbb{I}(\mathcal{A},\mathcal{I})$ is normalized, monotonically non-decreasing and submodular.
\end{thm}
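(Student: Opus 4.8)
The plan is to read off all three properties directly from the additive definition of $\mathbb{I}(\mathcal{A},\mathcal{I})$ in Eq.~(\ref{influ_set_S}), which writes the set influence as a plain sum of per-worker influences $\varphi(u,\mathcal{I})$. The crucial structural observation is that $\mathbb{I}$ is in fact \emph{modular}: it is a sum over the elements of $\mathcal{A}$ whose summands do not depend on which other workers belong to $\mathcal{A}$. Since every modular set function is automatically submodular (the diminishing-returns inequality holding with equality), the entire task reduces to checking non-negativity of the summands and unwinding the definitions.

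I would dispatch the two easy properties first. For \emph{normalized}, note that $\mathbb{I}(\emptyset,\mathcal{I}) = 0$ because the sum in Eq.~(\ref{influ_set_S}) is empty. For \emph{monotone non-decreasing}, I would establish that each summand is non-negative: in Eq.~(\ref{influence_user_target}) the function $d(\cdot,\cdot)$ is the squared distance, hence $d \ge 0$, and the normalizer $1/|\mathcal{I}_u|$ is strictly positive, so $\varphi(u,\mathcal{I}) \ge 0$ for every $u \in \mathcal{M}$. Consequently, enlarging the selected set only appends non-negative terms, which yields $\mathbb{I}(\mathcal{A},\mathcal{I}) \le \mathbb{I}(\mathcal{B},\mathcal{I})$ whenever $\mathcal{A} \subseteq \mathcal{B}$.

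For \emph{submodularity} I would compute the marginal gain of inserting a worker $u \notin \mathcal{A}$: by additivity,
\begin{align}
\mathbb{I}(\mathcal{A}\cup\{u\},\mathcal{I}) - \mathbb{I}(\mathcal{A},\mathcal{I}) = \varphi(u,\mathcal{I}), \nonumber
\end{align}
which is \emph{independent} of $\mathcal{A}$. Hence for any $\mathcal{A}\subseteq\mathcal{B}$ with $u\notin\mathcal{B}$, the marginal gain over $\mathcal{A}$ equals the marginal gain over $\mathcal{B}$, so the defining inequality $\mathbb{I}(\mathcal{A}\cup\{u\},\mathcal{I}) - \mathbb{I}(\mathcal{A},\mathcal{I}) \ge \mathbb{I}(\mathcal{B}\cup\{u\},\mathcal{I}) - \mathbb{I}(\mathcal{B},\mathcal{I})$ holds, in fact with equality.

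There is essentially no hard technical obstacle here, precisely because the influence is defined additively rather than through a joint recomputation of the aggregated values after removing the whole set $\mathcal{A}$ at once. The only point deserving care is the non-negativity of $\varphi$ together with the implicit assumption that $\widehat{x}^{*}_i(\mathcal{M}\setminus\{u\})$ remains well defined (each item retains at least one rating worker after a single removal), which is what makes Eq.~(\ref{influence_user_target}) meaningful. I would also flag that the equality in the diminishing-returns condition reveals $\mathbb{I}$ to be modular, so an even simpler optimality justification for the greedy selection in Algorithm~\ref{Find_Influential_work_Set} is available; but since the statement claims only submodularity, verifying the inequality as above suffices.
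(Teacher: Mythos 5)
Your proof is correct and takes essentially the same route as the paper's: both read all three properties directly off the additive definition in Eq.~(\ref{influ_set_S}), getting normalization from the empty sum, monotonicity from non-negativity of the squared-distance summands $\varphi(u,\mathcal{I})$, and submodularity from the fact that marginal gains cannot increase (the paper phrases this via set differences plus monotonicity, you via constant singleton gains). Your explicit observation that $\mathbb{I}(\mathcal{A},\mathcal{I})$ is in fact \emph{modular}---so the diminishing-returns inequality holds with equality, and the greedy selection in Algorithm~\ref{Find_Influential_work_Set} is exactly optimal rather than merely $\left(1-\frac{1}{e}\right)$-optimal---is a sharper reading of the same structural fact, and you justify the non-negativity step more explicitly than the paper does.
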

\begin{proof}
	Define three sets $\mathcal{P}$, $\mathcal{K}$ and $\mathcal{Q}$, where $\mathcal{K} \subseteq \mathcal{P}$ and $\mathcal{Q} = \mathcal{P} \setminus \mathcal{K}$. 
	To simplify the notation, we use $\mathbb{I}(\mathcal{P})$ to denote $\mathbb{I}(\mathcal{P},\mathcal{I})$. 
	When there is no ambiguity, we let $\mathbb{I}(u)$ denote $\mathbb{I}(\{ u \})$ for $u \in \mathcal{M}$.
	Since $\mathbb{I}(\emptyset) = 0$, the influence function is normalized. We also have
	$
	\mathbb{I}(\mathcal{P}) - \mathbb{I}(\mathcal{K}) 
	= \sum\nolimits_{u \in \mathcal{P}} {\mathbb{I}(u)} - \sum\nolimits_{u \in \mathcal{K}} {\mathbb{I}(u)} 
	= \sum\nolimits_{u \in \mathcal{P} \setminus \mathcal{K} }  {\mathbb{I}(u)}
	= \mathbb{I}(\mathcal{Q})
	\ge 0 \nonumber,
	$
	which shows that influence $\mathbb{I}(\mathcal{A},\mathcal{I})$ is monotonically non-decreasing.
	To prove the submodular property, we define an arbitrary set $\mathcal{C}$ and we have
	$
	\mathbb{I}(\mathcal{P} \cup \mathcal{C}) - \mathbb{I}(\mathcal{K} \cup \mathcal{C})
	= \mathbb{I}((\mathcal{P} \cup \mathcal{C}) \setminus (\mathcal{K} \cup \mathcal{C})) 
	= \mathbb{I}(\mathcal{Q} \setminus (\mathcal{Q} \cap \mathcal{C})) \le \mathbb{I}(\mathcal{Q}) = \mathbb{I}(\mathcal{P}) - \mathbb{I}(\mathcal{K}). \nonumber
	$
	Thus the influence $\mathbb{I}(\mathcal{A}, \mathcal{I})$ is submodular and the proof is complete.
\end{proof}	

Based on the submodular property of influence $\mathbb{I}(\mathcal{A}, \mathcal{I})$, we propose a greedy selection method (Algorithm \ref{Find_Influential_work_Set}) to find an influential worker set $\mathcal{A}$ with $\left \lfloor \alpha \vert \mathcal{M} \vert \right \rfloor$ workers.
%
To be specific, we first compute the influence of each worker and add the worker with the largest influence to the set $\mathcal{A}$. Then we compute the influence of the remaining workers in the set $\mathcal{M} \setminus \mathcal{A}$, repeat this process until we find $\left \lfloor \alpha \vert \mathcal{M} \vert \right \rfloor$ workers.
Theorem~\ref{greedy_approximate} states that Algorithm \ref{Find_Influential_work_Set} finds a $(1-1/e)$ approximation solution with linear running time complexity.

\begin{thm} \label{greedy_approximate}
	Let $\mathcal{A}$ be an influential worker set returned by Algorithm~\ref{Find_Influential_work_Set} and $\mathcal{A^*}$ be the optimal influential worker set, respectively. It then holds that
	$\mathbb{I}(\mathcal{A},\mathcal{I}) \ge \left( {1 - \frac{1}{e}} \right) \mathbb{I}(\mathcal{A^*},\mathcal{I}) \nonumber$.
\end{thm}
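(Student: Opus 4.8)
The plan is to invoke the classical greedy guarantee for maximizing a normalized, monotone, submodular set function under a cardinality constraint, which is precisely the structure already established in Theorem~\ref{monotonically_non_decreasing}. Write $k = \lfloor \alpha |\mathcal{M}| \rfloor$ for the budget, and let $\mathcal{A}_0 = \emptyset \subseteq \mathcal{A}_1 \subseteq \cdots \subseteq \mathcal{A}_k = \mathcal{A}$ denote the nested sequence of sets produced by Algorithm~\ref{Find_Influential_work_Set}, where $\mathcal{A}_{i+1} = \mathcal{A}_i \cup \{u_{i+1}\}$ and $u_{i+1} = \argmax_{u \in \mathcal{M} \setminus \mathcal{A}_i} \bigl[ \mathbb{I}(\mathcal{A}_i \cup \{u\}) - \mathbb{I}(\mathcal{A}_i) \bigr]$ is the greedily chosen worker. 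As in the previous proof, I abbreviate $\mathbb{I}(\cdot)$ for $\mathbb{I}(\cdot,\mathcal{I})$.

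First I would bound the optimum $\mathbb{I}(\mathcal{A}^*)$ in terms of the greedy value after $i$ steps. Using monotonicity, then submodularity (telescoping the marginal gains of inserting the elements of $\mathcal{A}^* \setminus \mathcal{A}_i$ one at a time on top of $\mathcal{A}_i$), one obtains
\[
\mathbb{I}(\mathcal{A}^*) \le \mathbb{I}(\mathcal{A}^* \cup \mathcal{A}_i) \le \mathbb{I}(\mathcal{A}_i) + \sum_{u \in \mathcal{A}^* \setminus \mathcal{A}_i} \left[ \mathbb{I}(\mathcal{A}_i \cup \{u\}) - \mathbb{I}(\mathcal{A}_i) \right].
\]
Since $u_{i+1}$ maximizes the single-element marginal gain at step $i+1$, every summand is at most $\mathbb{I}(\mathcal{A}_{i+1}) - \mathbb{I}(\mathcal{A}_i)$, and there are at most $|\mathcal{A}^*| = k$ of them, so
\[
\mathbb{I}(\mathcal{A}^*) - \mathbb{I}(\mathcal{A}_i) \le k \left[ \mathbb{I}(\mathcal{A}_{i+1}) - \mathbb{I}(\mathcal{A}_i) \right].
\]

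Then I would introduce the optimality gap $\delta_i = \mathbb{I}(\mathcal{A}^*) - \mathbb{I}(\mathcal{A}_i)$. The displayed inequality rearranges to the recursion $\delta_{i+1} \le (1 - 1/k)\,\delta_i$, and unrolling from $\delta_0 = \mathbb{I}(\mathcal{A}^*)$ (using the normalization $\mathbb{I}(\emptyset) = 0$ from Theorem~\ref{monotonically_non_decreasing}) yields $\delta_k \le (1 - 1/k)^k\,\mathbb{I}(\mathcal{A}^*)$. Finally the elementary bound $(1 - 1/k)^k \le e^{-1}$ gives $\mathbb{I}(\mathcal{A}^*) - \mathbb{I}(\mathcal{A}) \le \frac{1}{e}\,\mathbb{I}(\mathcal{A}^*)$, i.e.\ $\mathbb{I}(\mathcal{A}) \ge (1 - 1/e)\,\mathbb{I}(\mathcal{A}^*)$, as claimed. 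The linear running time follows because each of the $k$ rounds evaluates $\varphi(\cdot,\mathcal{I})$ over the remaining workers.

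The main obstacle is the two-step bound at the start: I would need to apply monotonicity to get $\mathbb{I}(\mathcal{A}^*) \le \mathbb{I}(\mathcal{A}^* \cup \mathcal{A}_i)$ and then the diminishing-returns direction of submodularity with the \emph{correct} base set $\mathcal{A}_i$ so that each individual marginal gain can legitimately be compared against the greedy choice; everything after that is a routine geometric recursion. I would also remark that, because $\mathbb{I}$ is in fact additive over workers by Eq.~(\ref{influ_set_S}), the greedy rule actually recovers the exact optimum, so the $(1-1/e)$ factor is a (loose but standard) worst-case guarantee inherited from the general monotone-submodular setting rather than the tight bound here.
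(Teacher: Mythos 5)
Your proof is correct and follows essentially the same argument as the paper's: monotonicity, a telescoping submodularity bound with base set $\mathcal{A}_i$, the greedy maximality of each marginal gain (at most $k = \lfloor \alpha |\mathcal{M}| \rfloor$ terms), and the geometric recursion $\left(1-\frac{1}{k}\right)^k \le \frac{1}{e}$. Your closing remark is also apt: since $\mathbb{I}$ is defined additively over workers in Eq.~(\ref{influ_set_S}), the greedy rule (which, by that same additivity, coincides with Algorithm~\ref{Find_Influential_work_Set}'s selection by $\varphi$) is in fact exactly optimal, so the $(1-1/e)$ factor is a loose worst-case guarantee inherited from the general submodular setting.
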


\begin{proof}
	Let 
	$\mathcal{A}^* = \left\{
	a_1, a_2,...,a_{\left \lfloor \alpha \vert \mathcal{M} \vert \right \rfloor} 
	\right\}$ be the optimal influential worker set,
	 and $\mathcal{A}_i$ be the worker set after the $i$-th iteration of Algorithm~\ref{Find_Influential_work_Set}. 
	 Thus, we have
	$
	\mathbb{I}(\mathcal{A}^*) 
    \stackrel{(a)}  \le \mathbb{I}(\mathcal{A}_i \cup \mathcal{A}^*) 
	= \mathbb{I}(\mathcal{A}_i) + \mathbb{I}(\mathcal{A}_i \cup \{ a_1 \} ) - \mathbb{I}(\mathcal{A}_i) + 
	\mathbb{I}(\mathcal{A}_i \cup \{ a_1,a_2 \} ) - \mathbb{I}(\mathcal{A}_i \cup \{ a_1 \}) + \cdots 
	\stackrel{(b)} \le \mathbb{I}(\mathcal{A}_i) + \mathbb{I}(\mathcal{A}_i \cup \{ a_1 \} ) - \mathbb{I}(\mathcal{A}_i) + \mathbb{I}(\mathcal{A}_i \cup \{ a_2 \} ) - \mathbb{I}(\mathcal{A}_i) 
	+ \cdots + \mathbb{I}(\mathcal{A}_i \cup \{ a_{\left \lfloor \alpha \vert \mathcal{M} \vert \right \rfloor } \} ) - \mathbb{I}(\mathcal{A}_i) 
	\stackrel{(c)} \le \mathbb{I}(\mathcal{A}_i) + {\left \lfloor \alpha \vert \mathcal{M} \vert \right \rfloor}(\mathbb{I}(\mathcal{A}_{i+1})-\mathbb{I}(\mathcal{A}_{i})), 
	$
	where (a) follows from the monotonically non-decreasing property of influence $\mathbb{I}(\mathcal{A},t)$; 
	(b) uses the submodular property of influence; and (c) is due to $\vert {\mathcal{A}_i} \vert \le \left \lfloor \alpha \vert \mathcal{M} \vert \right \rfloor$.
	Arranging the terms, we obtain
	$
	\mathbb{I}(\mathcal{A}^*) -\mathbb{I}(\mathcal{A}_{i+1}) \le \left( {1 - \frac{1}{ \left \lfloor \alpha \vert \mathcal{M} \vert \right \rfloor}} \right)(\mathbb{I}(\mathcal{A}^*) - \mathbb{I}(\mathcal{A}_{i})). \nonumber
	$
	Recursively applying the inequality, we have
	$
	\mathbb{I}(\mathcal{A}^*) -\mathbb{I}(\mathcal{A}_{\left \lfloor \alpha \vert \mathcal{M} \vert \right \rfloor}) \le \left( {1 - \frac{1}{\left \lfloor \alpha \vert \mathcal{M} \vert \right \rfloor }} \right)^{\left \lfloor \alpha \vert \mathcal{M} \vert \right \rfloor }  \mathbb{I}(\mathcal{A}^*) \le \frac{1}{e} \mathbb{I}(\mathcal{A}^*).
	$
	Thus, we have $\left( 1- \frac{1}{e} \right) \mathbb{I}(\mathcal{A}^*) \le \mathbb{I}(\mathcal{A}_{\left \lfloor \alpha \vert \mathcal{M} \vert \right \rfloor })$,
	which completes the proof.
\end{proof}

After using the influence function $\mathbb{I}(\mathcal{A},\mathcal{I})$ to find the influential worker set $\mathcal{A}$ with $\left \lfloor \alpha \vert \mathcal{M} \vert \right \rfloor$ workers, the server then removes workers in set $\mathcal{A}$ from the crowdsourcing systems (the server views these workers as malicious workers), and finally estimates the value for each item with the remaining workers.
In our MIE defense, we use the CRH model to find influential workers and estimate the aggregated values of items.
Our MIE defense is stated in Algorithm~\ref{MIU-defense}.

\subsection{Defense Evaluation}

Figures~\ref{defense_max_attack}-\ref{defense_full_attack_GTM} show the average estimation errors of different attacks on the CRH, GTM, MWA and MIE methods, where full-knowledge attacks are considered.
The number of groups in MWA defense is set to 5, 4 and 5 for Synthetic, Emotion and Weather datasets, respectively.
From Figures~\ref{defense_max_attack}-\ref{defense_full_attack_GTM}, we observe that both MWA and MIE defenses could mitigate the impacts of malicious workers.
MIE achieves a better defense performance compared to MWA.
However, MIE and MWA are \emph{not} directly comparable since we assume the server knows the crowdsourcing system is being attacked and the server knows the number of malicious workers exist in the system in MIE.
For the MWA defense mechanism, the strategy of dividing workers into groups and computing the median between different groups can only reduce the impact of malicious workers since malicious workers still exist in the system; 
and if the percentage of malicious workers is high, there would be more malicious workers in each group on average, which leads to less robust weighted average in each group.
We also find that even if the server is equipped with malicious workers detection capability, our proposed MWA and MIE defenses may still be vulnerable to poisoning attacks if the percentage of malicious workers is high.
For example, the average estimation error of MWA is still 14.57 on the Weather dataset when the attacker injects 30\% of malicious workers under the CRH model.


\section{Conclusion} \label{sec:conclusion}

In this paper, we performed a systematic study on data poisoning attacks and defenses to crowdsourcing systems.
We demonstrated that crowdsourcing systems are vulnerable to data poisoning attacks.
We proposed an optimization-based data poisoning attack to blend malicious workers into normal workers and increase the estimation errors of the aggregated values for attacker-chosen targeted items.
Our attacks are effective under both full-knowledge and partial-knowledge settings.
Furthermore, we designed two defense mechanisms to mitigate the impacts of malicious workers.
Our results showed that our proposed attacks can increase the estimation errors substantially and our defenses are effective.


\section*{Acknowledgements}
This work is supported in part by NSF grants CAREER CNS-2110259, CIF-2110252, ECCS-1818791, CCF-2110252, CNS-1937786, IIS-2007941, ONR grant ONR N00014-17-1-2417, and a Google Faculty Research Award.

\balance
\bibliographystyle{ACM-Reference-Format}
\bibliography{refs}

\end{document}